\def\>{\rangle}
\def\<{\langle}
\def\id{\mathsf{id}}
\def\mE{\mathcal{E}}
\def\mF{\mathcal{F}}
\def\mS{\mathcal{S}}
\def\mV{\mathcal{V}}
   \newcommand{\rl}{\rangle\langle}
\renewcommand{\qedsymbol}{\nobreak \ifvmode \relax \else
	\ifdim \lastskip<1.5em \hskip-\lastskip \hskip1.5em plus0em
	minus0.5em \fi \nobreak \vrule height0.75em width0.5em
	depth0.25em\fi}
\renewcommand{\geq}{\geqslant}
\renewcommand{\leq}{\leqslant}
\newtheorem{theorem}{Theorem}
\newtheorem*{theorem*}{Theorem}
\newtheorem{corollary}{Corollary}
\newtheorem{lemma}{Lemma}
\newtheorem*{lemma*}{Lemma}
\newtheorem{proposition}{Proposition}
\newtheorem{definition}{Definition}
\newtheorem*{definition*}{Definition}
\theoremstyle{remark}
\newtheorem{remark}{Remark}
\theoremstyle{definition}
\newcommand{\bea}{\begin{eqnarray}}
\newcommand{\eea}{\end{eqnarray}}
\newcommand{\be}{\begin{equation}}
\newcommand{\ee}{\end{equation}}
\newcommand{\ba}{\begin{equation}\begin{aligned}}
\newcommand{\ea}{\end{aligned}\end{equation}}
\newcommand{\bs}{\boldsymbol}
\newtheorem{example}{Example}
\def\be{\begin{equation}}
\def\ee{\end{equation}}
\newcommand{\spa}{{\rm span}}
\newcommand{\mH}{\mathcal{H}}
\newcommand{\mR}{\mathcal{R}}
\newcommand{\mO}{\mathcal{O}}
\newcommand{\mW}{\mathcal{W}}
\newcommand{\mK}{\mathcal{K}}
\newcommand{\lr}{\rangle\langle}
\newcommand{\la}{\langle}
\newcommand{\ra}{\rangle}
\newcommand{\tr}{{\rm Tr}}
\newcommand{\mbb}[1]{\mathbb{#1}}
\newcommand{\eqdef}{\coloneqq}
\newcommand{\mbR}{\mathbb{R}}
\begin{document}
	
	
	\title{Quantum resource theories in the single-shot regime}
	
	\author{Gilad Gour}
	
	\date{\today}
	
	\begin{abstract}
	One of the main goals of any resource theory such as entanglement, quantum thermodynamics, quantum coherence, and asymmetry, is to find necessary and sufficient conditions (NSC) that determine whether one resource can be converted to another by the set of free operations. Here we find such NSC for a large class of quantum resource theories which we call \emph{affine} resource theories (ART). ARTs include the resource theories of athermality, asymmetry, and coherence, but not entanglement. Remarkably, the NSC can be expressed as a family of inequalities between resource monotones (quantifiers) that are given in terms of the conditional min entropy. The set of free operations is taken to be (1) the maximal set (i.e. consists of all resource non-generating (RNG) quantum channels) or (2) the self-dual set of free operations (i.e. consists of all RNG maps for which the dual map is also RNG). As an example, we apply our results to quantum thermodynamics with Gibbs preserving operations, and several other ARTs. Finally, we discuss the applications of these results to resource theories that are not affine, and along the way, provide the NSC that a quantum resource theory consists of a resource destroying map~\cite{Liu}.
	\end{abstract}

	\maketitle
	
A few of the key hallmarks of quantum information science are characterized with the recognition that certain properties of quantum systems, such as entanglement, can be viewed as resources for quantum information processing tasks~\cite{DW04,NC11,W13}. These realizations have initially sparked the development of entanglement theory~\cite{PV07,HHH09}, and later on the development of other quantum resource theories (QRTs)~\cite{Gou08,HOO2013,CFS14}. Today, in addition to entanglement, QRTs provides an ideal platform to study many properties of quantum systems including (but not limited to) athermality~\cite{BHO13,HO13,BHN13,FDO12,Los15,Lost15,GMN14,NG14}, 
asymmetry~\cite{Gou08,Gou09,Mar14,Sko12,Tol12}, coherence~\cite{BCP14,Chi16,CG16,Marv16,Win16,Nap16,Str16}, non-Gaussianity~\cite{BL05,BJS03}, contextuality~\cite{GHH14,VMG14}, non-Markovianity~\cite{RHP14}, knowledge~\cite{Rio15}, and incompatibility~\cite{GHS}. All these QRTs have in common three ingredients: free states, free operations, and quantum resources. 
These components are not independent of each other, as with free operations alone it is not possible to convert free states into resource states. This general structure suggests the existence of general theorems that can be applied to a large class of QRTs. Indeed, recently such a theorem was proved in~\cite{Bra15}, showing that many QRTs are asymptotically reversible if the set of free operations is maximal (i.e. consists of all possible operations that can not generate a resource from free states). 

In the single copy regime, where the law of large numbers does not apply, there are no known such theorems that can be applied to all QRTs. This is, in part, due to the fact that the set of free states and free operations can be very different from one QRT to another. Even the asymptotic reversibility result of~\cite{Bra15} holds only if the set of free states satisfy certain conditions and the set of free operations is maximal. Therefore, in order to understand better QRTs, it is essential to classify them according to some general properties that add an additional structure, and then obtain general theorems that apply to QRTs with this additional structure. 

In this paper, we consider one of the core problems of any QRT in the single-shot regime: given two resource states $\rho$ and $\rho'$, what are the necessary and sufficient conditions (NSC) that determines whether it is possible to convert $\rho$ to $\rho'$ by \emph{free} quantum operations? We answer this question for QRTs with the property that any density matrix that can be expressed as an \emph{affine} combination of free states is itself a free state. We call such QRTs \emph{affine} resource theories (ARTs). We show that QRTs of athermality, asymmetry, and coherence, are all ARTs, while entanglement theory is not an ART. Remarkably, our NSC can be expressed in terms of resource monotones (i.e. functions from the set of density matrices to the non-negative real numbers that behaves monotonically under free operations). Specifically, we find that
$$
\rho\xrightarrow[]{\begin{subarray}{l} \rm \;\;\;\;free \\ \rm operations \end{subarray}}\rho'
$$
if and only if for any $t\in[0,1]$, and any density matrix $\eta$,
\be\label{maineq}
R_{\eta,t}(\rho)\leq R_{\eta,t}(\rho')\;,
\ee
where $R_{\eta,t}$ are functions on the set of density matrices that are given in terms of the conditional min-entropy~\cite{Ren05,Kon09,Vit13,Tom09,Buscemi,BG16} of a certain mixture of $\eta\otimes\rho$ with another separable state (see Definition~\eqref{Def2} for the precise definition of $R_{\eta,t}$).

Our results can be applied to two sets of free operations: (1) The maximal set of all resource non-generation (RNG) maps (quantum channels), and (2) the set consisting of all RNG maps
with a dual map that is also RNG (for example, in the QRT of coherence, this is the set of all dephasing covariant operations~\cite{CG16,Marv16}). 
We discuss the applications of our results particularly to the QRT of thermodynamics 
with Gibbs preserving operations, and in the supplemental material (SM) to quantum coherence with maximal operations or dephasing covariant operations~\cite{CG16,Marv16}. In addition,
we show that QRTs with a resource destroying map (RDM)~\cite{Liu} form a strict subset of ARTs (see Fig.~1), and in the SM we provide the NSC that a QRT consists of a RDM. 

The conditional min entropy is defined by
\be
H_{\min}(A|B)_\Omega=-\log\min_{\tau\geq 0}
\left\{\tr[\tau]\;\Big|\;I\otimes \tau\geq \Omega^{AB}\right\}\;.
\ee
where the minimum is over all positive semi-definite matrices $\tau$.
It is known to be a single-shot analog of the conditional quantum entropy $S(A|B)\equiv S(A,B)-S(B)$, where $S$ is the von-Neuman entropy defined by $S(\rho)=-\tr[\rho\log\rho]$. This analogy is particularly motivated by the fully quantum asymptotic equipartition property~\cite{Tom09}, which states that in the asymptotic limit of many copies of $\Omega^{AB}$, the smooth version of $H_{\min}(A|B)$ approaches the conditional (von-Neumann) entropy.  The conditional min-entropy has numerous applications in single-shot quantum information (e.g~\cite{Ren05,Kon09,Vit13,Tom09})  and quantum hypothesis testing (e.g.~\cite{Buscemi,BG16} and references therein).
To illustrate the role of the conditional min-entropy in QRTs, we start with a relatively simple example of quantum thermodynamics under Gibbs preserving operations.

\begin{figure}
\includegraphics[scale=0.30]{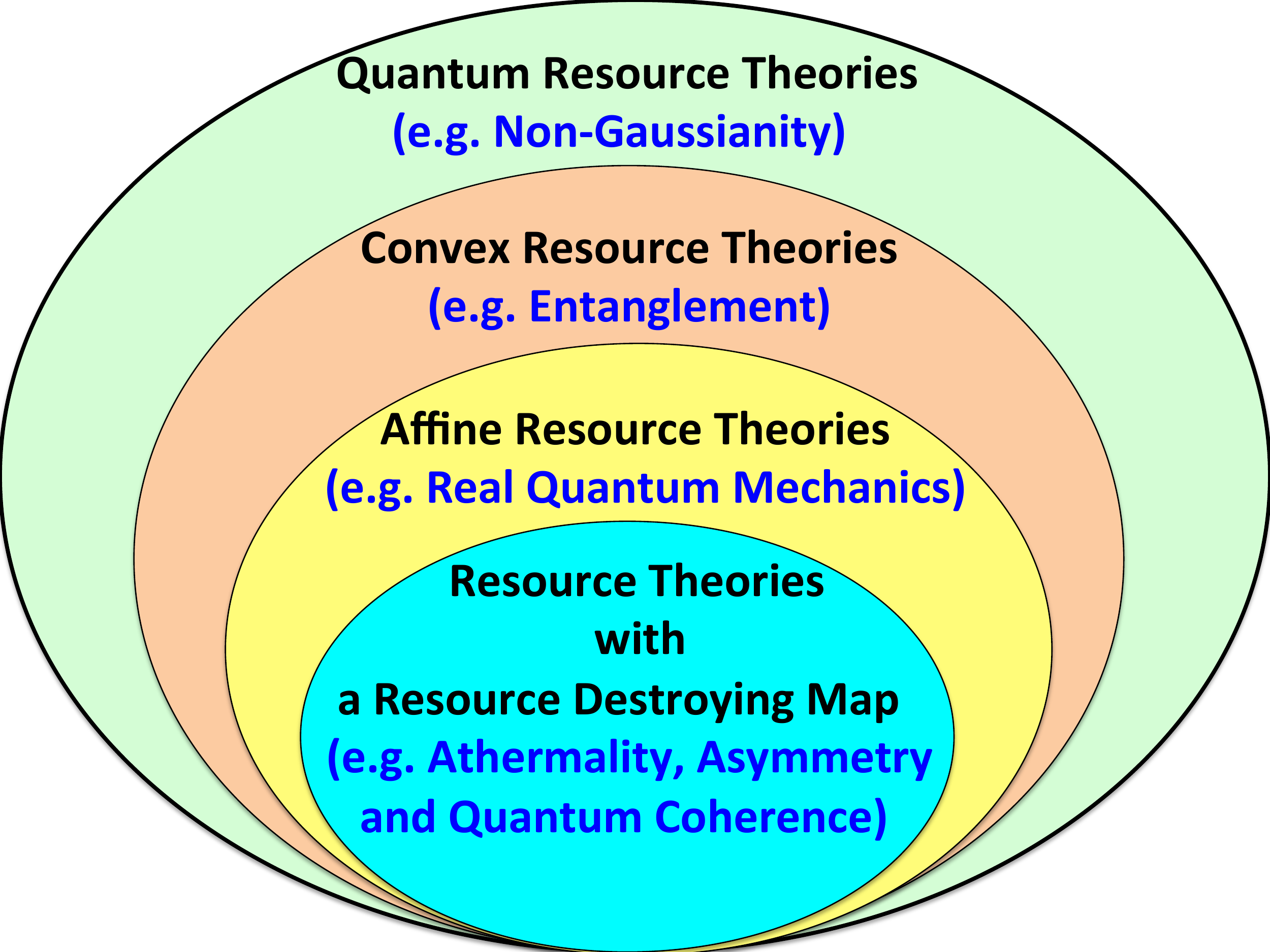}
\caption{An heuristic diagram of QRTs, classified according to the properties of their set of free states. Non-Gaussianity is an example of a QRT with non-convex set of free states. Entanglement theory is an example of a QRTs that is convex but not affine. Real (vs complex) quantum mechanics (see SM) is an example of an affine QRT that does not have a RDM, and athermality, asymmetry, and coherence, are examples of QRTs with a RDM.}
\label{fig}
\end{figure}

Let $\mH_d$ be the real vector space of $d\times d$ Hermitian matrices, $\mH_{d,+}\subset \mH_{d}$ be the cone of positive semidefinite matrices, and $\mH_{d,+,1}\subset \mH_{d,+}$ be the set of all $d\times d$ density matrices. 
In the resource theory of athermality, the set of free states consist of only one state (the Gibbs state) $\gamma\in\mH_{d,+,1}$, and the set of Gibbs preserving operations consists of all quantum channels, i.e. completely positive and trace preserving (CPTP) maps, $\mE:\;\mH_{d}\to\mH_{d'}$ that satisfies $\mE(\gamma)=\gamma'$, where $\gamma'\in\mH_{d',+,1}$ is the Gibbs state of the output
(in the most general case, the output Gibbs state $\gamma'$ may be associated with a different Hamiltonian than the Hamiltonian that is associated with the input Gibbs state $\gamma$).  
Now, in this case, our main theorem (Theorem~\ref{general}) take the following simple form (see SM for more details). Let
\begin{align}\label{omegagibbs}
&\Omega^{AB}_{\bs{\omega}}(\rho)=\frac{1}{2}\left(\omega_0\otimes\rho+\omega_1\otimes\gamma\right)\nonumber\\
&\Omega^{AB}_{\bs{\omega}}(\rho')=\frac{1}{2}\left(\omega_0\otimes\rho'+\omega_1\otimes\gamma'\right)\;,
\end{align}
where $\omega_0,\omega_1\in\mH_{d',+,1}$ are two arbitrary density matrices.
Then, 
$\rho$ can be converted to $\rho'$ by Gibbs preserving operations if and only if for all $\omega_0,\omega_1\in\mH_{d',+,1}$
\be\label{hmins}
H_{\min}(A|B)_{\Omega_{\bs{\omega}}(\rho)}\leq  H_{\min}(A|B)_{\Omega_{\bs{\omega}}(\rho')}\;.
\ee
This remarkable result is the quantum generalization of \emph{thermo-majorization}~\cite{HO13}. It demonstrates that the functions $f_{\bs{\omega}}(\rho)\equiv 2^{-H_{\min}(A|B)_{\Omega_{\bs{\omega}}(\rho)}}$, which are also known to quantify the amount of correlations in the state $\Omega_{\bs{\omega}}^{AB}(\rho)$~\cite{Kon09}, form a complete set of athermality monotones. For diagonal $\omega_0$ and $\omega_1$ the states $\Omega_{\bs{\omega}}^{AB}(\rho)$ becomes classical-quantum states, and in this case $f_{\bs{\omega}}(\rho)$ can be interpreted as the optimal \emph{guessing probability} (i.e. the optimal probability to guess correctly the classical variable after measuring the quantum system). In the classical case, it is known (see e.g.~\cite{BG16}) that the guessing probabilities provide conditions that are equivalent to thermo-majorization, however, in the full quantum case, the coherence or off-diagonal terms of $\omega_0$ and $\omega_1$ in~\eqref{hmins} needs also to be considered, so that the guessing probabilities are in general insufficient to determine if $\rho$ can be converted to $\rho'$ by Gibbs preserving operations.

We now move to discuss the general case.
Denote by $\mR(\mF_{\text{in}},\mF_{\text{out}},\mO)$ a QRT consisting of input and output free sets $\mF_{\text{in}}\subset\mH_{d,+,1}$ and $\mF_{\text{out}}\subset\mH_{d',+,1}$, respectively, and a set of free operations $\mO$. The set $\mO$ consists all free CPTP maps from the input space $\mH_{d,+,1}$ to the output space $\mH_{d',+,1}$.
By the definition of a QRT, 
any free operations $\mE\in\mO$ can not generate a resource from a free state. Mathematically, if $\sigma\in\mF_{\text{in}}$ and $\mE\in\mO$ then $\mE(\sigma)\in\mF_{\text{out}}$. We call the set of all such CPTP maps \emph{resource non-generating} (RNG) operations, and denote it by $\mO_{\max}$. Note that $\mO\subset\mO_{\max}$.  The main results of this paper can be applied to a class of resource theories that we call \emph{affine} resource theories (ARTs).

\begin{definition}
A set of quantum states $\mF\subset\mH_{d,+,1}$ is said to be \emph{affine} if any affine combination of states in $\mF$ that is positive semi-definite is itself in $\mF$. That is, if
\be\label{affine}
\rho=\sum_it_i\sigma_i \in\mH_{d,+,1}
\ee
for some $\sigma_i\in\mF$ and $t_i\in\mbb{R}$, then $\rho\in\mF$. Moreover,
a QRT, $\mR(\mF_{\text{in}},\mF_{\text{out}},\mO)$, is said to be \emph{affine} if both $\mF_{\text{in}}$ and $\mF_{\text{out}}$ are affine.
\end{definition}	

The affine condition also implies that $\mF$ is convex, but as we show below, convexity of $\mF$ does not necessarily imply that $\mF$ is affine.   Moreover, note that if $\mF$ is affine and $\mV\equiv{\rm span}_{\mbb{R}}\{\mF\}$ is the subspace of $\mH_{d}$ consisting of all the linear combinations of the elements in $\mF$, then the \emph{only} positive semi-definite matrices in $\mV$ are the elements of $\mF$. Therefore, $\mF$ is affine if and only if $\mF=\mV\cap \mH_{d,+,1}$. In the SM we provide further characterizations of affine sets and ARTs. In particular, we show that if $\mF$ is affine and $\dim\mV=n$, then there exists $n$ density matrices $\sigma_1,...,\sigma_n\in\mF$ such that $\mV={\rm span}_{\mbb{R}}\{\sigma_1,...,\sigma_n\}$. This will be useful in what follows.

The QRTs of athermality, asymmetry, and coherence, are all ARTs.  The QRT of athermality is affine since the set of free states contains only the Gibbs state, while the QRT of coherence is affine since the set of free states contains only diagonal elements.
On the other hand, entanglement theory is not affine. We know it since the set of bipartite separable states is not of measure zero, and in particular contains a ball with the maximally mixed state at its center.  Hence, entanglement theory is not an ART, and in fact, it can be viewed as ``maximally non-affine" in the sense that \emph{all} states can be written as an affine combination of free (even pure product) states.

The following notion of \emph{duality} of a set of density matrices plays an important role in ARTs. 
The \emph{dual} set, $\mF^\star$ of a set of states $\mF\in\mH_{d,+,1}$ is defined here as:
\be\label{c11}
\mF^{\star}\equiv\left\{\omega\in\mH_{d',+,1}\;|\;\tr\left[\omega\sigma\right]=\tr\left[\omega\sigma'\right]\;\forall\sigma,\sigma'\in\mF\right\}
\ee
Note that this dual set is affine (and therefore convex) even if $\mF$ is not affine, and the maximally mixed state $u_d\equiv\frac{1}{d}I_d\in\mF^{\star}$. In the SM we provide more properties of this dual set, and in particular show that if $\mF$ is affine and $u_d\in\mF$ then $\mF^{\star\star}=\mF$. Furthermore, the function $g:\mF^{\star}\to[0,1]$ defined by $g(\omega)=\tr[\omega\sigma]$, where $\sigma$ is any state in $\mF$, provides a characterization of $\mF^{\star}$ (see SM). We now use it to define a class of functions that behave monotonically under maps in $\mO_{\max}$ .

\begin{definition}\label{Def2}
Let $\mR(\mF_{\text{in}},\mF_{\text{out}},\mO)$ be an ART as above, and set $n\equiv\dim\mV_{\rm in}$. For any  $t\in g(\mF^{\star}_{\rm out})\subset [0,1]$, let
$\mS_{t}^{\rm in}$ and $\mS_{t}^{\rm out}$ be the set of all states $\Theta^{AB}$ of the form
\be
\Theta^{AB}=\frac{1}{n}\sum_{\ell=1}^{n}\omega_{\ell}^{T}\otimes\sigma_{\ell}\;.
\ee
Here, $\omega_{\ell}\in\mF^{\star}_{\rm out}$, $r\left(\Theta^A\right)=t$, and for $\mS_{t}^{\rm in}$,
$\sigma_{\ell}\in\mF_{\rm in}$, whereas for $\mS_{t}^{\rm out}$, $\sigma_{\ell}\in\mF_{\rm out}$.
With this notations, for any
 $t\in g(\mF^{\star}_{\rm out})$, and $\eta\in\mH_{d,+,1}$, we define the functions $R_{\eta,t}: \mH_{d,+,1}\to [0,1]$ by
\be\label{gg}
R_{\eta,t}(\rho)\equiv\min_{\Theta^{AB}\in\mS_{t}^{\rm in}} 2^{-H_{\min}(A|B)_{\Omega_{\eta,\Theta}}(\rho)}
\ee
where 
\be\label{newomega}
\Omega_{\eta,\Theta}^{AB}(\rho)\equiv\frac{1}{n+1}
\left(\eta^{T}\otimes\rho+n\Theta^{AB}\right)\;.
\ee
Similarly, for $\rho'\in\mH_{d',+,1}$ in the output space, $R_{\eta,t}(\rho')$ is defined exactly as above with $\mS_{t}^{\rm out}$ replacing $\mS_{t}^{\rm in}$. 
\end{definition}
\begin{remark}
We will see in the theorem below that the functions $R_{\eta,t}$ form a complete set of resource monotones, determining whether or not there exists a RNG map converting a state in the input space to a state in the output space.
Since $2^{-H_{\min}(A|B)_{\Omega_{\bs{\omega}}(\rho)}}$ quantify the amount of correlations in the the state $\Omega_{\bs{\omega}}^{AB}(\rho)$~\cite{Kon09},  
the quantities $R_{\eta,t}(\rho)$ quantify the minimum amount of correlations in separable states obtained by mixing the product state $\eta\otimes\rho$ with the separable states $\Theta^{AB}$ as in~\eqref{newomega}. They take a simple form when $n=1$ (e.g. QRT of athermality) in which $\Omega_{\eta,\Theta}^{AB}$ has the form~\eqref{omegagibbs} with $\eta\equiv\omega_0$.
\end{remark}

\begin{theorem}\label{general}
Let $\mR(\mF_{{\rm in}},\mF_{{\rm out}},\mO)$ be an ART as above, $\rho\in\mH_{d,+,1}$ and 
$\rho'\in\mH_{d',+,1}$ be two states, and $\mO_{\max}$ be the set of RNG operations. 
Assuming both $\mF_{{\rm in}}$ and $\mF_{{\rm out}}$ are non-empty, let $n$ be the dimension of the input subspace 
$
\mV_{{\rm in}}\equiv{\rm span}_{\mbb{R}}\{\mF_{{\rm in}}\}={\rm span}_{\mbb{R}}\{\sigma_1,...,\sigma_n\}
$,
where $\sigma_1,...,\sigma_n\in\mF_{\rm in}$.
Denote $\boldsymbol{\omega}\equiv\{\eta,\omega_1,...,\omega_n\}$ with $\eta\in\mH_{d',+,1}$ and with $\omega_j\in\mF_{{\rm out}}^{\star}$ for $j=1,...,n$,
where $\mF_{{\rm out}}^{\star}$ is the dual of $\mF_{{\rm out}}$. Finally, for any such $\bs{\omega}$, denote by
$\Omega^{AB}_{\bs{\omega}}(\rho)$ the state $\Omega_{\eta,\Theta}^{AB}(\rho)$ as defined in~\eqref{Def2} for this fixed choice of $\sigma_1,...,\sigma_n$. Then, the following are equivalent:
\begin{enumerate}
\item There exists $\mE\in\mO_{\max}$ such that $\rho'=\mE(\rho)$.
\item For any $\boldsymbol{\omega}$ as above, with $\omega\equiv\frac{1}{n}\sum_{j=1}^{n}\omega_j$,
\be\label{for1}
2^{-H_{\min}(A|B)_{\Omega_{\bs{\omega}}(\rho)}}
\geq \frac{\tr[\eta\rho']+ng(\omega)}{n+1}
\;.
\ee
\item For any $\eta\in\mH_{d',+,1}$ and $t\in g(\mF^{\star}_{\rm out})\subset[0,1]$
\be\label{for22}
R_{\eta,t}(\rho)\geq R_{\eta,t}(\rho')\;.
\ee
\end{enumerate}
\end{theorem}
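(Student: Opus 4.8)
\emph{Proof strategy.} The plan is to prove the cycle $(1)\Rightarrow(3)\Rightarrow(2)\Rightarrow(1)$, which isolates the one genuinely hard step in the last implication. Throughout I use the two semidefinite forms of the min-entropy, namely $2^{-H_{\min}(A|B)_{\Omega}}=\min\{\tr[\tau]:I_A\otimes\tau\succeq\Omega\}$ and its dual $2^{-H_{\min}(A|B)_{\Omega}}=\max\{\tr[X\Omega]:X\geq 0,\ \tr_A X\leq I_B\}$, together with the basic identity $\tr[\Phi^+(M\otimes N)]=\tr[M^T N]$ for the (unnormalized) maximally entangled operator $\Phi^+\equiv\sum_{i,j}\ket{ii}\bra{jj}$.

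For $(1)\Rightarrow(3)$ I would invoke the data-processing inequality for the conditional min-entropy: for any CPTP map acting on the conditioning system $B$ one has $2^{-H_{\min}(A|B)_{(\id\otimes\mE)\Omega}}\leq 2^{-H_{\min}(A|B)_{\Omega}}$. Given $\rho'=\mE(\rho)$ with $\mE\in\mO_{\max}$ and any $\Theta=\tfrac1n\sum_\ell\omega_\ell^T\otimes\sigma_\ell\in\mS_t^{\rm in}$, apply $\id_A\otimes\mE_B$ to $\Omega_{\eta,\Theta}(\rho)$. Since $\mE$ is RNG each $\mE(\sigma_\ell)\in\mF_{\rm out}$, and since $\mE$ is trace preserving the $A$-marginal, hence the value of $t$, is unchanged; thus $(\id\otimes\mE)\Theta\in\mS_t^{\rm out}$ and $(\id\otimes\mE)\Omega_{\eta,\Theta}(\rho)=\Omega_{\eta,(\id\otimes\mE)\Theta}(\rho')$. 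Data processing then gives the inequality term by term, and minimizing over $\Theta$ yields $R_{\eta,t}(\rho')\leq R_{\eta,t}(\rho)$.

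For $(3)\Rightarrow(2)$ the crucial input is an exact lower bound on the \emph{output} monotone, $R_{\eta,t}(\rho')\geq\frac{\tr[\eta\rho']+nt}{n+1}$. This follows by feeding the dual-feasible choice $X=\Phi^+$ (for which $\tr_A\Phi^+=I_B$) into the max-form of the min-entropy: for every $\Theta\in\mS_t^{\rm out}$ one computes $\tr[\Phi^+\Omega_{\eta,\Theta}(\rho')]=\tfrac1{n+1}\big(\tr[\eta\rho']+\sum_\ell\tr[\omega_\ell\sigma_\ell]\big)=\tfrac{\tr[\eta\rho']+nt}{n+1}$, where $\tr[\omega_\ell\sigma_\ell]=g(\omega_\ell)$ because $\sigma_\ell\in\mF_{\rm out}$ and $\omega_\ell\in\mF_{\rm out}^\star$. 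On the input side, the fixed-basis state $\Omega_{\bs\omega}(\rho)$ is itself of the form $\Omega_{\eta,\Theta}(\rho)$ with $\Theta=\tfrac1n\sum_j\omega_j^T\otimes\sigma_j\in\mS_t^{\rm in}$ and $t=g(\omega)\in g(\mF_{\rm out}^\star)$, so $2^{-H_{\min}(A|B)_{\Omega_{\bs\omega}(\rho)}}\geq R_{\eta,t}(\rho)$. Chaining this with $(3)$ and the lower bound gives exactly $(2)$.

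The main obstacle is $(2)\Rightarrow(1)$, which I would prove by separation plus semidefinite duality. The reachable set $\mathcal{K}=\{\mE(\rho):\mE\in\mO_{\max}\}$ is compact and convex, so if $\rho'\notin\mathcal{K}$ there is a Hermitian witness; shifting it by a multiple of the identity on the trace-one slice and rescaling, it may be taken to be a density matrix $\eta$ with $\tr[\eta\rho']>h(\eta)\equiv\max_{\mE\in\mO_{\max}}\tr[\eta\mE(\rho)]$. Writing $h(\eta)$ as an SDP over the Choi matrix $J$ of $\mE$ — objective $\tr[(\rho^T\otimes\eta)J]$, subject to $J\geq 0$, $\tr_{\rm out}J=I$, and the RNG equalities $\tr[(\sigma_i^T\otimes\omega)J]=g(\omega)$ for $\omega\in\mF_{\rm out}^\star$ — and dualizing, the constraint $J\geq 0$ turns (up to the transpose inherent in the Choi isomorphism) into the min-entropy constraint $I_A\otimes\tau\succeq(n+1)\,\Omega_{\bs\omega}(\rho)$, while the dual objective collapses to $(n+1)\,2^{-H_{\min}(A|B)_{\Omega_{\bs\omega}(\rho)}}-n\,g(\omega)$. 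Strong duality (Slater is witnessed by the RNG channel $X\mapsto\tr[X]\sigma_0$ with $\sigma_0\in\mF_{\rm out}$, and the feasible Choi set is compact) would then yield $h(\eta)=\min_{\omega_1,\dots,\omega_n\in\mF_{\rm out}^\star}\big[(n+1)\,2^{-H_{\min}(A|B)_{\Omega_{\bs\omega}(\rho)}}-n\,g(\omega)\big]$, contradicting $(2)$ at the minimizing $\bs\omega$. The delicate points I expect to fight are (i) verifying attainment and absence of a duality gap in the presence of the affine RNG equalities, and (ii) showing the optimal dual multipliers can be normalized and taken \emph{inside} $\mF_{\rm out}^\star$ rather than in its real span, so that the minimization ranges over exactly the set appearing in $(2)$; tracking this normalization is precisely where the factor $1/(n+1)$ and the offset $n\,g(\omega)$ arise, via the fact that $I\in\mathrm{span}\,\mF_{\rm out}^\star$ together with $g$-bookkeeping.
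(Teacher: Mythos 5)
Your implications $(1)\Rightarrow(3)$ and $(3)\Rightarrow(2)$ are correct, and they are essentially the paper's own arguments: data processing of $H_{\min}(A|B)$ under CPTP maps acting on $B$ for necessity, and the dual-feasible choice $X=|\phi^+\rangle\langle\phi^+|$ (equivalently the operational duality bound $2^{-H_{\min}(A|B)_{\Omega'}}\geq d'\la\phi^+|\Omega'|\phi^+\ra$ of K\"onig et al.) to get $R_{\eta,t}(\rho')\geq\frac{\tr[\eta\rho']+nt}{n+1}$. The genuine gap is in $(2)\Rightarrow(1)$, and it is exactly your ``delicate point (ii)'', which is not a normalization technicality but the heart of the theorem. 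Lagrangian/SDP duality applied to $h(\eta)=\max_{\mE\in\mO_{\max}}\tr[\eta\,\mE(\rho)]$ produces dual multipliers $\tilde\omega_1,\dots,\tilde\omega_n$ ranging over the \emph{real span} of $\mF_{\rm out}^{\star}$ (equivalently, over $\mV_{\rm out}^{\perp}$ plus multiples of $u_{d'}$), not over $\mF_{\rm out}^{\star}$ itself. Each multiplier can be shifted to trace one for free, because the objective is invariant under $\tilde\omega_\ell\mapsto\tilde\omega_\ell+c\,u_{d'}$ (the change in the min-entropy term is exactly cancelled by the change in $n\,g(\omega)$); but once the trace is fixed to one the shift is used up, and positive semidefiniteness of the resulting $\tilde\omega_\ell$ can fail, with no remaining freedom to restore it. Hence the identity you rely on, $h(\eta)=\min_{\bs{\omega}}\bigl[(n+1)\,2^{-H_{\min}(A|B)_{\Omega_{\bs{\omega}}(\rho)}}-n\,g(\omega)\bigr]$ with the minimum over \emph{states} $\omega_\ell\in\mF_{\rm out}^{\star}$, is not what strong duality gives: duality yields only the easy inequality $\min_{\bs{\omega}}[\cdots]\geq h(\eta)$, which is the direction you do not need, and I see no argument that the reverse inequality holds for an arbitrary fixed witness $\eta$.

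Contrast this with how the paper gets around the problem: it never fixes $\eta$. Condition 1 is phrased as an SDP \emph{feasibility} problem for the Choi matrix, the Farkas lemma produces a certificate $M^{AB}=-\tr[Y\rho']\,I_{d'}\otimes\tau+Y\otimes\rho^{T}+\sum_{\ell}H_{\ell}\otimes\sigma_{\ell}^{T}$, and the condition ``$M^{AB}$ is not positive definite'' is \emph{scale invariant}. That invariance under $M^{AB}\to sM^{AB}$ is what allows the paper to rescale $Y$ and all the $H_\ell$ simultaneously so that both $\eta=u_{d'}-sZ$ and $\omega_\ell=u_{d'}-sF_\ell$ become density matrices (small perturbations of the maximally mixed state, automatically in $\mF_{\rm out}^{\star}$). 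In other words, the witness $\eta$ and the multipliers $\bs{\omega}$ must be constructed \emph{jointly} from one certificate and normalized together; pinning $\eta$ down first by a separating hyperplane, as you do, destroys precisely the rescaling freedom that makes the state-normalization possible. Your scheme might be repairable by pushing the separating witness arbitrarily close to $u_{d'}$ and running a quantitative perturbation argument, but as written the key duality formula is unproven (and I cannot verify it is even true for general $\eta$), so the implication $(2)\Rightarrow(1)$ does not go through.
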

\begin{remark}
The set $\mF_{{\rm out}}^{\star}$ is convex, and since $\omega_j\in\mF_{{\rm out}}^{\star}$ for $j=1,...,n$ we conclude that $\omega\in\mF_{{\rm out}}^{\star}$. Therefore, $r(\omega)$ is a well defined function from $\mF_{{\rm out}}^{\star}$ to $[0,1]$. Since the RHS of~\eqref{for1} depends only on $r(\omega)$ and $\omega_0$, we can minimize the LHS over all matrices with the same value of $r(\omega)$. In fact, note that from the above theorem, the function
$$
W(\rho,\rho')\equiv\min_{\bs{\omega}}\left(2^{-H_{\min}(A|B)_{\Omega_{\bs{\omega}}(\rho)}}
- \frac{\tr[\eta\rho']+nr(\omega)}{n+1}\right)\;,
$$
where the minimum is over all $\bs{\omega}$ as defined above, is non-negative if and only if $\rho$ can be converted to $\rho'$ by RNG operations.
\end{remark}

We provide now the sketch of the proof, while keeping the technical details to the SM. 
Denoting by $\sigma^{AB}=\mE\otimes\id (|\phi^+\lr\phi^+|)\in\mH_{d'd,+}$ the Choi matrix associated with $\mE$, where $|\phi^+\ra=\sum_{j=1}^{d}|jj\ra$ is the unnormalized maximally entangled state, the condition $\rho'=\mE(\rho)$
is equivalent to the existence of such a Choi matrix (of a free operation) that satisfies
\be
\rho'=\tr_B\left[\sigma^{AB}\left(I_{d'}\otimes\rho^{T}\right)\right]\;\text{ and }\;\tr_A\left[\sigma^{AB}\right]=I_d\;.
\ee
These equations are equivalent to
	\begin{align}
	&\tr\left[\sigma^{AB}\left(Y\otimes\rho^{T}\right)\right]=\tr\left[Y\rho'\right]\quad\forall\;Y\in\mathcal{H}_{d'}\label{222}\\
	&\tr\left[\sigma^{AB}\left(I_{d'}\otimes X\right)\right]
	=\tr[X]\quad\forall\;X\in \mathcal{H}_d\label{111}
	\end{align}
In addition to the above conditions, there are constraints on the Choi matrix $\sigma^{AB}$ that comes from the fact that $\mE$ is a \emph{free} operation. Particularly, if $\mE\in\mO_{\max}$ then $\mE(\sigma)\in\mF_{\rm out}$ for all $\sigma\in\mF_{\rm in}$. From the linearity of $\mE$, we have
$\mE(X)\in\mV_{\rm out}$ for all $X\in\mV_{\rm in}$. Therefore, if 
$\mE\in\mO_{\max}$ then 
\be\label{rng2}
\tr\left[Y\mE(X)\right]=0\quad\forall\;X\in\mV_{\rm in}\;\;\text{ and }\;\;\forall\;Y\in\mV^{\perp}_{\rm out}\;,
\ee
where $\mV^{\perp}_{\rm out}$ is the orthogonal complement of 
$\mV_{\rm out}$ in $\mH_{d'}$. This property has to be satisfied for all
QRTs not necessarily ARTs. However, for ARTs the condition above is both necessary and sufficient since there are no density matrices in $\mV_{\rm in}$ and $\mV_{\rm out}$ that are not free. In the SM we show that all the conditions in Eqs.(\ref{222},\ref{111},\ref{rng2}) can be expressed as $\tr[\sigma^{AB}H_j]=0$ for some Hermitian matrices $H_j$.
Therefore, determining the existence of a non-zero positive semi-definite matrix $\sigma^{AB}$ that satisfies all these conditions is known to be an SDP feasibility problem. Hence, by applying the strong duality of this SDP feasibility problem (particularly, the SDP version of the Farkas lemma - see SM for details) we obtain after some algebraic manipulations the conditions in the theorem.

So far we only considered the maximal set of free operations, 
namely, the set of all RNG operations $\mO_{\max}$.
However, in many practical QRTs such as entanglement, athermality, and asymmetry, the operationally and physically motivated set of free operations, $\mO$, is much smaller than $\mO_{\max}$. For example, in entanglement theory LOCC is a much smaller set than non-entangling operations~\cite{Bra}. Also thermal operations form a much smaller set than Gibbs preserving operations. The problem in QRTs like entanglement theory, is that the physically motivated set of operations, $\mO$, cannot be characterized in the form $\tr[\sigma^{AB}H_j]=0$, and therefore the techniques from SDP cannot be applied directly in these important cases. For this reason, it is natural to search for a smaller subset of RNG operations that still can be characterized in a form suitable for SDP, and yet contains all the physically motivated free operations. We show here that for ARTs such a natural set exists, and we call it the \emph{self-dual} set of RNG operations. In the context of the QRT of coherence, this set of operations was called \emph{dephasing covariant operations}~\cite{CG16,Marv16}.
\begin{definition}
Let $\mR(\mF_{\rm in},\mF_{\rm out},\mO)$ be a QRT with $\mO\subset\mO_{\max}$ a set of free operations. We say that $\mO$
is \emph{self-dual} if for any CPTP map, $\mE:\mH_{d}\to\mH_{d'}$,  in $\mO$, we have
\be\label{selfdual1}
\mE\left(\mV_{\rm in}\right)\subset\mV_{\rm out}\quad\text{and}\quad\mE^{\dag}\left(\mV_{\rm out}\right)\subset\mV_{\rm in}\;.
\ee
Moreover, we denote by $\mO_{\rm sd}$ the set of all CPTP maps $\mE\in\mO_{\max}$ that satisfy~\eqref{selfdual1}.
\end{definition} 
\begin{remark}
Eq.~\eqref{selfdual1} for $\mE$ is equivalent to~\eqref{rng2}, and therefore the additional condition that $\mE^{\dag}\left(\mV_{\rm out}\right)\subset\mV_{\rm in}$ can also be expressed in an SDP form.
Therefore, similar SDP techniques can be applied to obtain the NSC that $\rho'=\mE(\rho)$ with $\mE\in\mO_{\rm sd}$ (see SM for details).
\end{remark}	
		
	Finally, we consider QRTs with a resource destroying map (RDM). Following the terminology of~\cite{Liu}, we call a CPTP map $\Delta: \mH_{d}\to\mH_{d}$ a \emph{resource destroying map} (RDM) if the following two conditions hold:
	\begin{align*}
	& 1.\quad \Delta(\rho)\in\mF\equiv\mF_{\rm in}=\mF_{\rm out}\;\;\;\;\forall\;\rho\in\mH_{d,+,1}\\
	& 2.\quad \Delta(\rho)=\rho\;\;\;\;\;\;\forall\;\rho\in\mF
	\end{align*}
While there is such a RDM in the QRTs of athermality, asymmetry, and coherence, a RDM does not always exists. For example, a simple consequence of the linearity of $\Delta$ implies that if $\mF$ is not convex then the QRT does not consists of a RDM~\cite{Liu}. However,
convexity of $\mF$ is not enough to ensure the existence of $\Delta$. In the SM we provide NSC on the set of free states, $\mF$, that ensure the existence of a RDM, and in particular show that a QRT with an RDM \emph{must} be 
affine. We also demonstrate with an example that not all ARTs have a RDM.

To summarize, we studied ARTs in which the set of free states satisfies the condition~\eqref{affine}. We used the strong duality of SDP to derive the conditions that determines whether or not it is possible to convert one resource to another by RNG operations.  As an application, we showed particularly how our results can be applied to quantum thermodynamics with Gibbs preserving operations, and left to the SM the applications to other ARTs. Remarkably, we were able to express the conditions in the form of a family of resource monotones that are given in terms of the conditional min entropy.

We were able to apply SDP techniques to ARTs because the conditions in Eq.~(\ref{rng2}) are linear in $\mE$. However, linear conditions are clearly not limited to ARTs.
There exists QRTs that are not affine for which similar techniques from SDP can also be applied to. One such example is the QRT of entanglement with PPT operations~\cite{Ish05}. On the other hand, as we have shown, the set of PPT or separable bipartite density matrices does not satisfy~\eqref{affine} and therefore PPT entanglement is not an ART.

It is important to note that SDP feasibility problems are not necessarily computationally easy to solve. In fact, some SDP feasibility problems are known to be NP-hard~\cite{Kha}. In our context, the reason we encounter an SDP \emph{feasibility} problem is that we only considered \emph{exact} transformations. Therefore, the fact that the strong duality leads to infinite number of conditions 
(as in~\eqref{for1} and~\eqref{for22}) is inevitable for the general case of ARTs. 
It may be possible to simplify these conditions when considering approximate single-shot transformations.

The implications of the results presented here go far beyond the scope of this paper. They include, for instance,
generalizations of the results to approximate transformations, as well as catalysis assisted transformations. Moreover, some of the techniques we used here can also be applied outside the scope of resource theories. We hope to report soon~\cite{GR16} on their applications in quantum hypothesis testing.
Finally, while the work presented here assumed that the free operations are maximal (or self-dual), we believe that similar techniques can also be applied to operations that are not maximal, such as thermal operations in quantum thermodynamics, and symmetric operations in the QRT of asymmetry. We leave these investigations for future work~\cite{GR16}.

\begin{acknowledgements}
The author is grateful for many interesting discussions with Mehdi Ahmadi, Francesco Buscemi, Eric Chitambar, Hoan Dang, Runyao Duan, Mark Girard, David Jennings, Iman Marvian, Rob Spekkens, and Borzu Toloui. The author's research is supported by NSERC.
\end{acknowledgements}

\begin{titlepage}
\center{\large\textbf{Supplementary Material\\Quantum resource theories in the single-shot regime}\\}
\center{~{ }\\}

\end{titlepage}

\onecolumngrid

\appendix

\section{The SDP version of the Farkas Lemma}\label{a}

Our techniques rely heavily on the semi-definite programming (SDP) version of the Farkas' lemma. The Farkas' lemma provides a strong-duality relation, stating that out of two systems of equations (or inequalities), one or the other has a solution, but not both nor none. Several versions of this Lemma can be found in standard textbooks on SDP.

\begin{lemma}\label{Farkas}\rm{(Farkas)}
	Let $H_1,...,H_n$ be $d\times d$ Hermitian matrices. Then, the system
	\be\label{positive}
	r_1H_1+\cdots+r_nH_n > 0
	\ee 
	has no solution in $r_1,...,r_n\in\mathbb{R}$ if and only if there exists a positive semidefinite matrix 
	$\sigma\neq 0$ such that
	\be\label{farkas}
	\tr[H_j\sigma]=0\quad\forall\;j=1,...,n\;.
	\ee
	\end{lemma}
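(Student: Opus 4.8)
The plan is to translate the two alternatives into statements about a single real subspace and the positive semidefinite (PSD) cone, and then combine a separating-hyperplane argument with the self-duality of that cone. Set $L\equiv\spa_{\mbb{R}}\{H_1,\dots,H_n\}\subset\mH_d$, equipped with the Hilbert--Schmidt inner product $\langle A,B\rangle=\tr[AB]$, and identify linear functionals on $\mH_d$ with maps $A\mapsto\tr[\sigma A]$. As $(r_1,\dots,r_n)$ ranges over $\mbb{R}^n$, the matrix $\sum_j r_jH_j$ ranges over all of $L$, so~\eqref{positive} is solvable precisely when $L$ meets the open cone $\mH_{d,++}$ of positive definite matrices. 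Likewise, the conditions $\tr[H_j\sigma]=0$ for all $j$ say exactly that $\sigma\in L^{\perp}$, so~\eqref{farkas} is solvable precisely when $L^{\perp}$ contains a nonzero PSD matrix. Thus the lemma is equivalent to the geometric dichotomy that $L$ is disjoint from $\mH_{d,++}$ if and only if $L^{\perp}$ meets the closed PSD cone $\mH_{d,+}$ nontrivially.

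I would dispatch the ``not both'' direction first. Suppose $X=\sum_j r_jH_j$ is positive definite while $\sigma\geq 0$ with $\sigma\neq 0$ satisfies~\eqref{farkas}. Then $\tr[X\sigma]=\sum_j r_j\tr[H_j\sigma]=0$; but $\tr[X\sigma]=\tr[X^{1/2}\sigma X^{1/2}]\geq \lambda_{\min}(X)\,\tr[\sigma]>0$, since $X$ has strictly positive smallest eigenvalue and $\tr[\sigma]>0$, a contradiction. Hence the two systems cannot both be solvable, which already yields one implication of the stated equivalence.

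For the substantive direction, assume~\eqref{positive} has no solution, so $L$ is disjoint from the open, convex, nonempty cone $\mH_{d,++}$ (nonempty since $I\in\mH_{d,++}$). Applying the open-set version of the separating-hyperplane theorem to $\mH_{d,++}$ and the disjoint convex set $L$, I obtain a nonzero $\sigma\in\mH_d$ and a scalar $c$ with $\tr[\sigma Z]<c\leq\tr[\sigma Y]$ for all $Z\in\mH_{d,++}$ and all $Y\in L$. Because $L$ is a linear subspace, the functional $Y\mapsto\tr[\sigma Y]$ is bounded below on $L$ only if it vanishes identically there (otherwise rescaling a witness by large reals drives it to $-\infty$); hence $\sigma\in L^{\perp}$ and $c\leq 0$, giving $\tr[\sigma Z]\leq 0$ for every $Z\in\mH_{d,++}$, and by taking closures for every $Z\in\mH_{d,+}$. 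This says $-\sigma$ lies in the dual of $\mH_{d,+}$, and the PSD cone is self-dual under $\langle\cdot,\cdot\rangle$ (testing against rank-one $Z=|v\rangle\langle v|$ shows that $\tr[\sigma' Z]\geq0$ for all PSD $Z$ forces $\sigma'\geq0$), so $-\sigma$ is a nonzero PSD element of $L^{\perp}$ solving~\eqref{farkas}. The crux is this last direction: the main obstacle is invoking the separation cleanly in the open-set form, which is what guarantees both that the functional is nontrivial and that it vanishes on $L$, and then matching it with self-duality of $\mH_{d,+}$; once both ingredients are in place the conclusion is immediate.
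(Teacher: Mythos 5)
Your proof is correct and takes essentially the same route as the paper's: both arguments separate the subspace ${\rm span}_{\mbb{R}}\{H_1,\dots,H_n\}$ from the interior of the PSD cone by a hyperplane, observe that the separating functional must vanish identically on that subspace, and invoke self-duality of $\mH_{d,+}$ to conclude the functional is represented by a nonzero PSD matrix satisfying~\eqref{farkas}. The only difference is completeness: you also spell out the easy ``not both'' direction and the rank-one test behind self-duality, both of which the paper leaves implicit.
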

	
	\begin{proof}
	Suppose there is no $x_1,...,x_n$ in $\mbR$ such that~\eqref{positive} holds, and recall that
	the set of positive semidefinite matrices $\mH_{d,+}$ is a convex closed cone in $\mH_d$. 
	From our assumption, its interior, $\mathrm{int}\mH_{d,+}$ is disjoint from the linear subspace
	$\mathcal{W}\equiv{\rm Span}_{\mbR}\{H_1,...,H_n\}$. Therefore, there exists a hyperplane $\mK\subset\mH_{d,+}$
	containing $\mW$ such that $\mK\cap\mathrm{int}\mH_{d,+}=\emptyset$.
	The hyperplane is characterized by:
	$
	\mK=\{X\;:\;\tr\left[X\sigma\right]=0\}
	$, where $\sigma$ is some non-zero matrix in $\mH_d$. Furthermore, the hyperplane can be chosen 
	such that $\mH_{+,d}$ is in one of its half-spaces. 
	We can therefore assume that $\tr\left[X\sigma\right]\geq 0$ for all 
	$X\in\mH_{+,d}$. This in turn implies that $\sigma\geq 0$. 
	Finally, since $H_i\in\mW\subseteq\mK$ for all $i=1,...,n$, we have $\tr[H_i\sigma]=0$ for all $i=1,...,n$.
		\end{proof}
		
\begin{remark}
	The positive-definite condition in~\eqref{positive} can be replaced with a negative-definite one. In particular, one can replace the condition that~\eqref{positive} has no solution with the condition that
	\be
	W_{\mathbf{r}}(H_1,...,H_n)\eqdef\lambda_{\max}\left(r_1H_1+\cdots+r_nH_n\right)\geq 0
	\ee
	for all $\mathbf{r}\in\mathbb{R}^n$. Moreover, 
	since $\mathbb{Q}^n$ is dense in $\mathbb{R}^n$, one can restrict $\mathbf{r}\in\mathbb{Q}^n$. 
	Since the set $\mathbb{Q}^n$ is countable, the condition above can be replaced further with
	\be
	W_{k}(H_1,...,H_n)\geq 0\quad\forall\;k\in\mathbb{N}\;,
	\ee
	where $W_{k}\equiv W_{r_k}$ with $\{r_k\}_{k\in\mathbb{N}}=\mathbb{Q}^n$.
	\end{remark}

\section{Further characterizations of affine sets and their dual}

In this section we discuss further properties of affine sets. We will consider a set $\mF\subset\mH_{d,+,1}$.
Note, that from its definition, $\mF$ is affine if and only if it satisfy the following condition:
\be
\mF=\mV\cap \mH_{d,+,1}\quad;\quad\mV\eqdef{\rm span}_{\mbb{R}}\{\mF_d\}\label{affine2}
\ee
where $\mV$ is the subspace of $\mH_{d}$ consisting of all the linear combinations of the elements in $\mF$. We start with the following property:

\begin{lemma*}
Let $\mF\subset\mH_{d,+,1}$ be an affine set with $\mV$ as above, with $\dim\mV=n$. Then $\mV$ has a basis consisting of $n$ density matrices $\sigma_1,...,\sigma_n\in\mF$, such that $\mV=\spa_{\mbb{R}}\{\sigma_1,...,\sigma_n\}$.
\end{lemma*}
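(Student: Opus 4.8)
The plan is to recognize this as the standard linear-algebra statement that every spanning set of a finite-dimensional vector space contains a basis, specialized to the spanning set $\mF$. By the very definition of $\mV$ we have $\mV = \spa_{\mbb{R}}\{\mF\}$, and by hypothesis every element of $\mF$ is a density matrix, i.e.\ $\mF \subset \mH_{d,+,1}$. Thus it suffices to extract a basis of $\mV$ from within the set $\mF$ itself; notably, the affine property is not actually needed for this particular statement, only the facts that $\mF$ spans $\mV$ and that $\dim \mV = n$.

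First I would invoke the general fact that for any subset $S$ of a vector space the maximal size of a linearly independent subset of $S$ equals $\dim\left(\spa_{\mbb{R}}\{S\}\right)$. Applied to $S = \mF$ with $\dim\mV = n$, this yields $n$ linearly independent elements $\sigma_1,\dots,\sigma_n \in \mF$: if only $m < n$ elements of $\mF$ could be chosen linearly independent, then $\spa_{\mbb{R}}\{\mF\} = \mV$ would have dimension at most $m$, contradicting $\dim\mV = n$. Equivalently, one can build the list greedily, starting from any nonzero element of $\mF$ and repeatedly adjoining an element of $\mF$ lying outside the current span (which exists as long as that span is proper, precisely because $\mF$ spans $\mV$); the dimension strictly increases at each step and is capped at $n$, so the process halts after exactly $n$ steps. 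An $n$-element linearly independent family in the $n$-dimensional space $\mV$ is automatically a basis, and since each $\sigma_i \in \mF \subset \mH_{d,+,1}$, this basis consists of density matrices, as required.

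I do not expect a genuine obstacle here; the only point deserving care is to verify that the extracted independent family actually spans $\mV$ and not merely a proper subspace, which is exactly where the identity $\mV = \spa_{\mbb{R}}\{\mF\}$ is used, together with the observation that the greedy construction cannot terminate before reaching dimension $n$. I anticipate the author's proof to coincide with this argument, very likely stated in the compact form that a spanning set contains a basis.
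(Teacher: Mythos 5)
Your proof is correct, but it takes a genuinely different route from the paper's. The paper does not extract a basis from $\mF$ directly; instead it fixes a state $\gamma\in\mF$ of maximal rank (which exists because $\mF$ is convex), takes an \emph{arbitrary} basis $X_1,\dots,X_n$ of $\mV$, and defines $\sigma_j\equiv(\gamma+t_jX_j)/(1+t_j\tr[X_j])$ for sufficiently small $t_j>0$: positivity of $\gamma+t_jX_j$ holds because every $X_j\in\mV=\spa_{\mbb{R}}\{\mF\}$ has its range inside $\supp\gamma$, and the conclusion $\sigma_j\in\mF$ is precisely where the affine hypothesis enters, since $\sigma_j$ is an affine combination of states in $\mF$ that is again a density matrix. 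By contrast, your argument is the standard ``a spanning set contains a basis'' extraction, and your observation is right: affinity (and even convexity) is never needed, only the definition $\mV=\spa_{\mbb{R}}\{\mF\}$ together with $\mF\subset\mH_{d,+,1}$. What each approach buys: yours is shorter, more elementary, and proves the lemma under strictly weaker hypotheses; the paper's construction yields extra structure the statement does not demand, namely that the free basis states can be chosen as small perturbations of a single maximal-rank free state, aligned with any prescribed basis $X_1,\dots,X_n$ of $\mV$, which is a convenient normalization for later arguments. One might add that the paper's proof, as written, still owes a small verification that the $n$ states $\gamma+t_jX_j$ are linearly independent (true for a generic choice of small $t_j$), a point your greedy construction handles automatically since each new element is chosen outside the span of the previous ones.
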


\begin{proof}
Let $\gamma$ be a state in $\mF$ with maximal rank. That is, the support space of any state $\sigma\in\mF$ is a subspace of the support of $\gamma$. Such a state exists since $\mF$ is convex. Now, let $X_1,...,X_n\in\mV$ be a basis of $\mV$. Then, for each $j=1,...,n$, let $t_j>0$ be a small enough number such that
$\gamma+t_j X_j\geq 0$. Denoting by 
\be
\sigma_j\equiv \frac{\gamma+t_j X_j}{1+t_j\tr[X_j]}\;,
\ee 
we conclude that $\sigma_j\in\mF$ since $\mF$ is affine, and 
\be
\spa_{\mbb{R}}\{\sigma_1,...,\sigma_n\}=\spa_{\mbb{R}}\{X_1,...,X_n\}=\mV
\ee
This completes the proof.
\end{proof}

We now discuss some of the properties of the dual of affine sets.

\begin{theorem*}
Let $\mF\subset\mH_{d,+,1}$ be an affine set of density matrices, $\mV\equiv{\rm span}_{\mbb{R}}\{\mF\}$, and $\mV_0\subset\mV$ be the subspace of traceless matrices in $\mV$.  
\begin{enumerate}
\item $\mF^{\star}$ is an affine set and $u_d\in\mF^{\star}$.
\item If $u_d\in\mF$  then $\mF^{\star\star}=\mF$ (and consequently $\mF^{\star\star\star}=\mF^{\star}$ even if $u_d\notin\mF$).
\item If $u_d\notin\mF$ then 
$$
\mF^{\star\star}=\left\{u_d+Y\;\Big|\;-u_d\leq Y\in\mV_{0}\right\}\;,
$$ 
and in particular $\mF^{\star\star}\cap\mF=\emptyset$.
\end{enumerate}
\end{theorem*}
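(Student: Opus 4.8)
The plan is to reformulate the dual entirely in terms of orthogonal complements for the Hilbert--Schmidt inner product $\langle X,Y\rangle=\tr[XY]$ on $\mH_d$, and then to obtain the double dual by iterating that reformulation. First I would observe that the defining condition $\tr[\omega\sigma]=\tr[\omega\sigma']$ for all $\sigma,\sigma'\in\mF$ says exactly that $\omega$ is orthogonal to every difference $\sigma-\sigma'$. Taking a basis $\sigma_1,\dots,\sigma_n\in\mF$ of $\mV$ supplied by the preceding lemma, the $n-1$ traceless vectors $\sigma_j-\sigma_1$ are linearly independent and hence form a basis of $\mV_0$ (which has dimension $n-1$, since the trace functional is nonzero on $\mV$). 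Therefore
\[
\mF^\star=\mH_{d,+,1}\cap\mV_0^\perp .
\]
Part 1 is then immediate: $\mV_0^\perp$ is a subspace, so $\mF^\star$ is an intersection of a subspace with the density matrices and is therefore affine by the characterization~\eqref{affine2}; and $u_d\in\mV_0^\perp$ since $\tr[u_d X]=\tfrac1d\tr[X]=0$ for traceless $X$.

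For the double dual I would isolate one general fact and reuse it: if $\mathcal{U}\subseteq\mH_d$ is a subspace containing $u_d$, then ${\rm span}_{\mbb{R}}\{\mathcal{U}\cap\mH_{d,+,1}\}=\mathcal{U}$. The inclusion $\subseteq$ is trivial; for $\supseteq$ I would use that $u_d$ is strictly positive, so a relative neighbourhood of $u_d$ inside the trace-one slice of $\mathcal{U}$ consists of density matrices, its affine hull is the whole slice, and the linear span of that slice is $\mathcal{U}$. Applying this with $\mathcal{U}=\mV_0^\perp$ gives ${\rm span}_{\mbb{R}}\{\mF^\star\}=\mV_0^\perp$. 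Running the same reformulation on $\mF^\star$, whose relevant traceless subspace is $\mV_0^\perp\cap\{X:\tr X=0\}$, and using the identities $(A\cap B)^\perp=A^\perp+B^\perp$, $(\mV_0^\perp)^\perp=\mV_0$, and $\{X:\tr X=0\}^\perp={\rm span}_{\mbb{R}}\{I_d\}$, I obtain
\[
\mF^{\star\star}=\mH_{d,+,1}\cap\big(\mV_0+{\rm span}_{\mbb{R}}\{I_d\}\big).
\]

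The two cases of the theorem then fall out by comparing $\mV_0+{\rm span}_{\mbb{R}}\{I_d\}$ with $\mV$. If $u_d\in\mF$ then $u_d\in\mV$, so $I_d=d\,u_d\in\mV$ and $\mV_0+{\rm span}_{\mbb{R}}\{I_d\}=\mV$ (since $\mV=\mV_0\oplus{\rm span}_{\mbb{R}}\{u_d\}$); hence $\mF^{\star\star}=\mH_{d,+,1}\cap\mV=\mF$, giving Part 2. The corollary $\mF^{\star\star\star}=\mF^\star$ I would get by applying Part 2 to $\mF^\star$, which is affine and contains $u_d$ by Part 1, so this needs no assumption on $\mF$. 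If instead $u_d\notin\mF$ then $u_d\notin\mV$, hence $I_d\notin\mV$; imposing trace one on a general element $Y+cI_d$ (with $Y\in\mV_0$) forces $c=1/d$, i.e. $cI_d=u_d$, so the trace-one members are precisely $u_d+Y$ with $Y\in\mV_0$, and positivity $u_d+Y\ge0$ yields the stated set $\{u_d+Y\mid -u_d\le Y\in\mV_0\}$. Finally $\mF^{\star\star}\cap\mF=\emptyset$, because any common point $u_d+Y$ with $Y\in\mV_0\subseteq\mV$ lying in $\mV$ would force $u_d=(u_d+Y)-Y\in\mV$, contradicting $u_d\notin\mV$.

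I expect the one genuinely non-routine step to be the span computation ${\rm span}_{\mbb{R}}\{\mF^\star\}=\mV_0^\perp$: the reverse inclusion rests on the relative-interior argument through the strictly positive $u_d$, and the care lies in keeping the affine hull of the trace-one slice distinct from its linear span so that the trace-carrying direction is correctly recovered. Once that is in place, the rest is bookkeeping with $(\mathcal{U}^\perp)^\perp=\mathcal{U}$ and $(A\cap B)^\perp=A^\perp+B^\perp$, together with the trivial trace-normalisation that pins the $I_d$-coefficient.
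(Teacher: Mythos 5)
Your proof is correct, and it is organized along a genuinely different route than the paper's. The paper argues element-wise and case-by-case: it shows $\mF\subseteq\mF^{\star\star}$ directly, and for the converse inclusions it inserts specific perturbed density matrices ($\omega=u_d+tX$, respectively $\omega=(1-t\tr[X])u_d+tX$, with $X\in\mV^{\perp}$ and $t$ small) into the defining condition $\tr[\gamma(\omega-u_d)]=0$ to extract orthogonality of $\gamma$ to $\mV^{\perp}$. You instead establish a structural formula up front, $\mF^{\star}=\mH_{d,+,1}\cap\mV_0^{\perp}$, isolate the reusable span lemma ${\rm span}_{\mbb{R}}\{\mathcal{U}\cap\mH_{d,+,1}\}=\mathcal{U}$ for any subspace $\mathcal{U}\ni u_d$ (whose proof via the relative neighbourhood of the strictly positive $u_d$ is where the paper's perturbation trick reappears, now packaged once and for all), and then compute $\mF^{\star\star}=\mH_{d,+,1}\cap\bigl(\mV_0+{\rm span}_{\mbb{R}}\{I_d\}\bigr)$ by orthogonal-complement calculus. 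This buys you several things the paper leaves implicit: both inclusions in Part 3 (the paper only proves $\mF^{\star\star}\subseteq\{u_d+Y\mid -u_d\le Y\in\mV_0\}$, not the reverse), an explicit proof of the disjointness $\mF^{\star\star}\cap\mF=\emptyset$, an actual argument for Part 1 (which the paper dismisses as "follows directly from the definitions"), and a clean justification of the parenthetical $\mF^{\star\star\star}=\mF^{\star}$ by applying Part 2 to the affine set $\mF^{\star}\ni u_d$. The paper's approach is shorter per case but repeats the perturbation argument twice and, incidentally, contains a sign slip ($Y=u_d-\gamma$ versus $\gamma=u_d+Y$) that your unified derivation avoids. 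One small wording caveat: when you justify Part 1 "by the characterization~\eqref{affine2}", note that the characterization as stated refers to the span of the set itself, which you have not yet computed at that point; the honest justification is the (trivial) observation that the intersection of \emph{any} subspace with $\mH_{d,+,1}$ satisfies the definition of an affine set, which is what you actually use.
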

\begin{remark}
Note that $\mF$ can be written as
$$
\mF=\left\{\gamma+Y\;\Big|\;-\gamma\leq Y\in\mV_{0}\right\}
$$
where $\gamma$ is a state in $\mF$ with a maximal rank. Therefore, roughly speaking, $F^{\star\star}$ 
is a shifted version of $\mF$ that contains the maximally mixed state.
\end{remark}
\begin{proof}
Property 1 follows directly from the definitions. We therefore move to prove property 2. Indeed, if $u_d\in\mF$ then
\be
\mF^{\star}\equiv\left\{\omega\in\mH_{d',+,1}\;|\;\tr\left[\omega\sigma\right]=\frac{1}{d}\;\;\;\;\forall\sigma\in\mF\right\}
\ee
and since we always have $u_d\in\mF^{\star}$ we conclude
\be
\mF^{\star\star}\equiv\left\{\gamma\in\mH_{d',+,1}\;|\;\tr\left[\gamma\omega\right]=\frac{1}{d}\;\;\;\;\forall\omega\in\mF^{\star}\right\}
\ee
Hence, if $\gamma\in\mF$ we must have $\tr[\gamma\omega]=1/d$ for all $\omega\in\mF^\star$, so that  $\gamma\in\mF^{\star\star}$. This proves $\mF\subset\mF^{\star\star}$. To prove the converse, note that if 
$\gamma\in\mF^{\star\star}$ then we must have
\be\label{ap1}
\tr\left[\gamma(\omega-u_d)\right]=0
\ee
for all $\omega$ that satisfies
\be\label{ap2}
\tr\left[\omega(\sigma-\sigma')\right]=0\quad\forall\;\sigma,\sigma'\in\mF\;.
\ee
The condition above is equivalent to $\omega\in\mV_{0}^{\perp}$.
Note also that all matrices in $\mV^{\perp}$ have a zero trace since we assume $u_d\in\mF\subset\mV$.
Now, any $\omega\in\mV_{0}^{\perp}$ can be written as $\omega=u_d+tX$ for some arbitrary $X\in\mV^\perp$ and small enough $t>0$ so that $\omega\geq 0$.  Combining this with~\eqref{ap1} we get $\tr[\gamma X]=0$ for all $X\in\mV^\perp$. This implies that $\gamma\in\mV$ and since $\mF$ is affine we get $\gamma\in\mF$. This completes the proof of property 2.

Finally, we prove property 3. As before, suppose $\gamma\in\mF^{\star\star}$ so that~\eqref{ap1} holds for all $\omega$  that satisfy~\eqref{ap2} or equivalently, for all $\omega\in\mV_{0}^{\perp}$. Similarly to the above argument, any $\omega\in\mV_{0}^{\perp}$ can be written as $\omega=(1-t\tr[X])u_{d}+tX$ for some arbitrary $X\in\mV^\perp$ and small enough $t>0$ so that $\omega\geq 0$. Combining with~\eqref{ap1} we conclude that
\be
\tr[\gamma X]=\frac{1}{d}\tr[X]\quad\forall\;X\in\mV^{\perp}.
\ee
Defining $Y=u_d-\gamma$ we get from the above equation that $\tr[XY]=0$ for all $X\in\mV^{\perp}$.
That is, $\gamma=u_d+Y$ with $Y\in\mV_0$.
\end{proof}

Note that the range of the function $g:\mF^{\star}\to [0,1]$ provides further characterization of $\mF^{\star}$.
For example, if the maximally mixed state $u_d\in\mF$, than $g(\omega)=\frac{1}{d}$ for \emph{all} $\omega\in\mF^{\star}$. On the other extreme, if $\mF$ consists of only one state $\gamma$, then
$g(\mF^{\star})=[\lambda_{\min}(\gamma),\lambda_{\max}(\gamma)]$. Particularly, if $\gamma$ is a pure state then $g(\mF^{\star})=[0,1]$.
We end this section with one more property of affine sets.

\begin{lemma*}
A set $\mF\subset\mH_{d,+,1}$ is affine if it is convex, and for any pair of distinct free states $\sigma_1,\sigma_2\in\mF$ and any $t\in\left[0, 2^{-D_{max}(\sigma_1\|\sigma_2)}\right]\subset[0,1]$, there exists a free state $\omega_t\in\mF$ such that $\sigma_2$ is the convex combination $\sigma_{2}=t\sigma_1+(1-t)\omega_t$. Here,
\be
D_{\max}(\sigma_1\|\sigma_2)=\log\min_{\lambda\in\mbb{R}_+}\left\{\lambda\;\big|\;\lambda\sigma_{2}\geq\sigma_1\right\}
\ee
\end{lemma*}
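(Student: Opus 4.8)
The plan is to verify the defining condition of affineness directly: I take an arbitrary affine combination $\rho=\sum_i t_i\sigma_i$ with $\sigma_i\in\mF$, $t_i\in\mbb{R}$, and $\rho\in\mH_{d,+,1}$, and show that the two hypotheses force $\rho\in\mF$. Since every $\sigma_i$ has unit trace, the requirement $\rho\in\mH_{d,+,1}$ already forces $\sum_i t_i=1$, so the combination is genuinely affine. The idea is to collapse this general affine combination into a single two-state relation to which the decomposition hypothesis can be applied.

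First I would split the sum according to the sign of the coefficients. Writing $P=\sum_{t_i>0}t_i$ and $N=\sum_{t_j<0}|t_j|$, the constraint $\sum_i t_i=1$ gives $P=1+N$. I define $\sigma_+\equiv P^{-1}\sum_{t_i>0}t_i\sigma_i$ and $\sigma_-\equiv N^{-1}\sum_{t_j<0}|t_j|\sigma_j$, which are convex combinations of free states and hence lie in $\mF$ by convexity. The degenerate cases are immediate: if $N=0$ then $\rho$ is itself a convex combination of free states, so $\rho\in\mF$; and if $\sigma_+=\sigma_-$ then $\rho=(1+N)\sigma_+-N\sigma_-=\sigma_+\in\mF$. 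So I may assume $N>0$ and $\sigma_+\neq\sigma_-$, which also secures the distinctness required by the hypothesis.

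Next I would rewrite the identity $\rho=(1+N)\sigma_+-N\sigma_-$ as a convex combination, $\sigma_+=\frac{1}{1+N}\rho+\frac{N}{1+N}\sigma_-$. Setting $\sigma_2=\sigma_+$, $\sigma_1=\sigma_-$, and $t=\frac{N}{1+N}\in[0,1)$, the relation $\sigma_2=t\sigma_1+(1-t)\omega_t$ has the unique solution $\omega_t=\frac{\sigma_2-t\sigma_1}{1-t}=\rho$. The key observation is that the hypothesis is applicable at this value of $t$ precisely because $\rho$ is positive semidefinite: from $\sigma_2-t\sigma_1=(1-t)\rho\geq 0$ we get $\sigma_2\geq t\sigma_1$, and since $2^{-D_{\max}(\sigma_1\|\sigma_2)}=\max\{\mu\geq 0\;|\;\sigma_2\geq\mu\sigma_1\}$ this is exactly the statement $t\leq 2^{-D_{\max}(\sigma_1\|\sigma_2)}$. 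Hence $t$ lies in the admissible interval, the decomposition hypothesis produces a free state realizing $\sigma_2=t\sigma_1+(1-t)\omega_t$, and by uniqueness of $\omega_t$ this free state equals $\rho$, giving $\rho\in\mF$.

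The only genuinely substantive point, and the step I would highlight as the crux, is this last identification of the admissibility range: one must recognize that $2^{-D_{\max}(\sigma_1\|\sigma_2)}$ is exactly the largest $\mu$ with $\sigma_2\geq\mu\sigma_1$, obtained by noting that $\sigma_2\geq\mu\sigma_1$ is equivalent to $\mu^{-1}\sigma_2\geq\sigma_1$, so that positivity of the candidate $\rho$ is equivalent to $t$ lying in the hypothesis's interval. Everything else is bookkeeping, namely separating signs, disposing of the degenerate $N=0$ and $\sigma_+=\sigma_-$ cases, and invoking convexity to land $\sigma_\pm$ in $\mF$.
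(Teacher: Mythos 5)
Your proof is correct and follows essentially the same route as the paper's own argument for this direction: split the affine combination $\rho=\sum_i t_i\sigma_i$ by the sign of the coefficients, use convexity to form the two free states $\sigma_\pm$, observe that positivity of $\rho$ is exactly the condition $t=N/(1+N)\leq 2^{-D_{\max}(\sigma_-\|\sigma_+)}$, and invoke the decomposition hypothesis to identify $\rho$ with the free state $\omega_t$. The only differences are cosmetic: you explicitly dispose of the degenerate cases ($N=0$ and $\sigma_+=\sigma_-$) that the paper passes over silently, and the paper additionally proves the (unstated) converse implication that affineness implies the decomposition property.
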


\begin{proof}
Suppose first that $\mF$ is affine. Then, for any distinct $\sigma_1,\sigma_2\in\mF$ and $t\in\left[0, 2^{-D_{max}(\sigma_1\|\sigma_2)}\right]$, the matrix $\sigma_1-t\sigma_2\geq 0$. Hence, since $\mF$ is affine the matrix $\omega_
t\equiv\frac{\sigma_1-t\sigma_2}{1-t}$ is free. Conversely, let $\omega=\sum_{j}s_j\omega_j$ be an affine combination of free states $\omega_j\in\mF$, with $\sum_j s_j=1$, and suppose $\omega\geq 0$. Then, $\omega$ can be written as
\begin{align}
\omega & =\sum_{j}s_j\omega_j=\sum_{\{j:\;s_j\geq 0\}}s_j\omega_j-\sum_{\{j:\;s_j\leq 0\}}|s_j|\omega_j\nonumber\\
& =(1+s)\sigma_2-s\sigma_1
\end{align}
where
\begin{align}
&s=\sum_{\{j:\;s_j\leq 0\}}|s_j|\geq 0\nonumber\\
&\sigma_2\equiv\frac{1}{1+s}\sum_{\{j:\;s_j\geq 0\}}s_j\omega_j\nonumber\\
&\sigma_1\equiv\frac{1}{s}\sum_{\{j:\;s_j\leq 0\}}|s_j|\omega_j
\end{align}
Since $\sigma_1$ and $\sigma_2$ are given as a convex combination of the free states $\omega_j$ they themselves free. Moreover, since $\omega\geq 0$ we must have $t\equiv\frac{s}{1+s}\leq 2^{-D_{max}(\sigma_1\|\sigma_2)}$. Therefore, from the assumption of the lemma, $\omega=(1+s)\sigma_2-s\sigma_1=\frac{\sigma_1-t\sigma_2}{1-t}$ is free. This completes the proof.
\end{proof}

\section{Proof of the Theorem~\ref{general}}

In this section we prove Theorem~\ref{general}. We start by stating a slightly stronger version of the theorem, including one more equivalent condition that is more technical, and that was not included in Theorem~\ref{general} (see the additional condition~\eqref{hmin3a} below).
\begin{theorem*}
Let $\mR(\mF_{{\rm in}},\mF_{{\rm out}},\mO)$ be an ART as above, $\rho\in\mH_{d,+,1}$ and 
$\rho'\in\mH_{d',+,1}$ be two density matrices, and $\mO_{\max}$ be the set of RNG operations. 
Assuming both $\mF_{{\rm in}}$ and $\mF_{{\rm out}}$ are non-empty, let $n$ be the dimension of the input subspace 
$
\mV_{{\rm in}}\equiv{\rm span}_{\mbb{R}}\{\mF_{{\rm in}}\}={\rm span}_{\mbb{R}}\{\sigma_1,...,\sigma_n\}
$,
where $\sigma_1,...,\sigma_n\in\mF_{\rm in}$.
Denote $\boldsymbol{\omega}\equiv\{\eta,\omega_1,...,\omega_n\}$ with $\eta\in\mH_{d',+,1}$ and with $\omega_j\in\mF_{{\rm out}}^{\star}$ for $j=1,...,n$,
where $\mF_{{\rm out}}^{\star}$ is the dual of $\mF_{{\rm out}}$. Finally, for any such $\bs{\omega}$, denote by
$\Omega^{AB}_{\bs{\omega}}(\rho)$ the state $\Omega_{\eta,\Theta}^{AB}(\rho)$ as defined in Def.~\ref{Def2} for this fixed choice of $\sigma_1,...,\sigma_n$. Then, the following are equivalent:
\begin{enumerate}
\item There exists $\mE\in\mO_{\max}$ such that $\rho'=\mE(\rho)$.
\item For any $\boldsymbol{\omega}$ as above, with $\omega\equiv\frac{1}{n}\sum_{j=1}^{n}\omega_j$,
\be\label{for66}
2^{-H_{\min}(A|B)_{\Omega_{\bs{\omega}}(\rho)}}
\geq \frac{\tr[\eta\rho']+nr(\omega)}{n+1}
\;.
\ee
\item For any $\eta\in\mH_{d',+,1}$ and $t\in g(\mF^{\star}_{\rm out})\subset[0,1]$
\be\label{for77}
R_{\eta,t}(\rho)\geq R_{\eta,t}(\rho')\;.
\ee
\item For all $\boldsymbol{\omega}$ as above
\be\label{hmin3a}
f_{\boldsymbol{\omega}}(\rho)\geq f_{\boldsymbol{\omega}}(\rho')
\ee
with
\be\label{DefGa}
f_{\boldsymbol{\omega}}(\rho)\equiv \min_{\{\sigma_\ell\}_{\ell=1}^{n}\subset\mF_{\rm in}}2^{-H_{\min}\left(A|B\right)_{\Omega_{\eta,\Theta}(\rho)}}
\ee
where the minimization is over all separable states $\Omega_{\eta,\Theta}^{AB}(\rho)$ as defined in Def.~\ref{Def2}, while keeping $\bs{\omega}$ fixed.
\end{enumerate}
\end{theorem*}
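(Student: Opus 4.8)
The plan is to recognize statement (1) as a semidefinite feasibility problem for the Choi matrix $\sigma^{AB}$ of $\mE$ and then dualize it with Lemma~\ref{Farkas}. As in the main-text sketch, $\rho'=\mE(\rho)$ for some $\mE\in\mO_{\max}$ holds iff there is a non-zero $\sigma^{AB}\ge0$ obeying~\eqref{222}, \eqref{111}, and the RNG condition~\eqref{rng2}, the last of which in Choi form reads $\tr[\sigma^{AB}(C\otimes\sigma_\ell^T)]=0$ for all $C\in\mV^{\perp}_{\rm out}$ and all $\ell=1,\dots,n$. The first move is to homogenize every constraint into the shape $\tr[\sigma^{AB}H_j]=0$ demanded by the lemma: taking $X=I_d$ in~\eqref{111} forces $\tr\sigma^{AB}=d$, so any inhomogeneous condition $\tr[\sigma^{AB}M]=c$ may be rewritten as $\tr[\sigma^{AB}(M-\tfrac{c}{d}I)]=0$.

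By Lemma~\ref{Farkas}, statement (1) is then equivalent to the assertion that \emph{no} real combination $M=\sum_j r_jH_j$ is positive definite. The coefficients assemble into a Hermitian $Y\in\mH_{d'}$ (dual to~\eqref{222}), a Hermitian $X\in\mH_d$ (dual to~\eqref{111}), and matrices $C_\ell\in\mV^{\perp}_{\rm out}$ (dual to the RNG constraints), giving $M=Y\otimes\rho^T+I_{d'}\otimes X+\sum_\ell C_\ell\otimes\sigma_\ell^T-\tfrac{\tr[Y\rho']+\tr X}{d}I$. I would then simplify $M$: a partial transpose on $B$ (which preserves definiteness) clears all transposes, and absorbing the trace-preservation variable $X$ into a single Hermitian $\tau$ on $B$ (whose trace becomes pinned to $\tr[\eta\rho']$ after the relabelling $\eta^T=-Y$, $\omega_\ell^T=-C_\ell$) turns ``$M$ not positive definite for all duals'' into the statement that the min-entropy-type SDP value $\min\{\tr\tau:I_A\otimes\tau\ge\eta^T\otimes\rho+\sum_\ell\omega_\ell^T\otimes\sigma_\ell\}$ is at least $\tr[\eta\rho']$, with $\omega_\ell$ still ranging only over $\mV^{\perp}_{\rm out}$.

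The step I expect to be hardest is upgrading these raw dual variables to the constrained objects of the theorem. Scale invariance of the Farkas combination lets me normalize $\tr\eta=1$, and since the min-entropy SDP only yields a non-trivial bound when its argument is positive semidefinite, the binding constraints are those with $\eta\ge0$, i.e.\ $\eta$ a density matrix. More delicate is the RNG block: the duals range only over $C_\ell\in\mV^{\perp}_{\rm out}$, whereas~\eqref{for66} uses $\omega_\ell\in\mF_{\rm out}^{\star}$. Because $\mF_{\rm out}^{\star}=\mV_{0,{\rm out}}^{\perp}\cap\mH_{d',+,1}$ while $\mV^{\perp}_{\rm out}$ is precisely the codimension-one subspace of $\mV_{0,{\rm out}}^{\perp}$ obtained by discarding the single trace direction of $\mV_{\rm out}$, writing $\omega_\ell=C_\ell+D_\ell$ with $D_\ell$ along that direction converts $\sum_\ell C_\ell\otimes\sigma_\ell$ into $\sum_\ell\omega_\ell^T\otimes\sigma_\ell$ at the cost of shifting the bound by a term that evaluates to $n\,r(\omega)$, where $r=g$ is the overlap functional $g(\omega)=\tr[\omega\sigma]$ of~\eqref{c11}. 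Carrying out this shift carefully---and confirming it reconstructs exactly $\tfrac{\tr[\eta\rho']+n\,r(\omega)}{n+1}$, with the operator becoming $(n+1)\Omega_{\bs\omega}(\rho)$---is the crux and yields (1)$\Leftrightarrow$(2).

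The remaining equivalences I would settle with one computational lemma: for states in the output space the inner minimization is explicit, namely $\min_{\sigma'_\ell\in\mF_{\rm out}}2^{-H_{\min}(A|B)_{\Omega_{\eta,\Theta}(\rho')}}=\tfrac{\tr[\eta\rho']+n\,r(\omega)}{n+1}$. This follows from the dual SDP $2^{-H_{\min}(A|B)_\Omega}=\max\{\tr[E\Omega]:E\ge0,\ \tr_A E=I_B\}$ together with the defining property of $\mF_{\rm out}^{\star}$, which makes every cross-overlap $\tr[\omega_\ell\sigma'_m]$ equal to $g(\omega_\ell)$ and hence decouples the problem (the optimal $\Theta$ takes all $\sigma'_\ell$ equal). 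Granting this, the right-hand side of~\eqref{for66} is exactly $f_{\bs\omega}(\rho')$ of~\eqref{DefGa}; and because (1), hence (2), is manifestly independent of the chosen spanning set $\sigma_1,\dots,\sigma_n$, the fixed-basis left-hand side may be replaced by its minimum $f_{\bs\omega}(\rho)$ over $\mF_{\rm in}$, giving (2)$\Leftrightarrow$(4). Finally, grouping the quantifier over $\bs\omega$ by the value $t=r(\omega)$ and identifying the input- and output-space minima with $R_{\eta,t}(\rho)$ and $R_{\eta,t}(\rho')=\tfrac{\tr[\eta\rho']+nt}{n+1}$ converts the family~\eqref{for66} into~\eqref{for77}, establishing (2)$\Leftrightarrow$(3).
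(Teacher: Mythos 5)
Your first half --- recasting statement (1) as a Choi-matrix SDP feasibility problem, homogenizing via $\tr\sigma^{AB}=d$, invoking Lemma~\ref{Farkas}, and then upgrading the dual variables to a density matrix $\eta$ and elements $\omega_\ell\in\mF_{\rm out}^{\star}$ by shifting along trace/identity directions (with a rescaling to enforce positivity) --- is essentially the paper's own route to $(1)\Leftrightarrow(2)$. One technical slip: a partial transpose on $B$ does \emph{not} preserve positive definiteness (the partial transpose of the maximally entangled projector already has negative eigenvalues); what works is the spectrum-preserving \emph{full} transpose, or equivalently leaving the transposes on the dual variables, which is exactly how $\Omega_{\bs{\omega}}$ is written ($\eta^{T}\otimes\rho$, $\omega_\ell^{T}\otimes\sigma_\ell$).

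The genuine gap is your ``computational lemma'': the claim that for output states the inner minimization evaluates exactly to
$R_{\eta,t}(\rho')=f_{\bs{\omega}}(\rho')=\frac{\tr[\eta\rho']+n\,r(\omega)}{n+1}$, because the dual SDP ``decouples''. Only the inequality $\geq$ is true (take $E=d'|\phi^+\rl\phi^+|$ in the dual); the optimal $E$ can do strictly better, so the right-hand side of~\eqref{for66} is in general \emph{not} equal to $f_{\bs\omega}(\rho')$ or $R_{\eta,t}(\rho')$. Concretely, let $d'=2$ and $\mF_{\rm out}=\{|0\rl 0|\}$, so that $\mF_{\rm out}^{\star}=\mH_{2,+,1}$ and $g(\omega)=\la 0|\omega|0\ra$; take $n=1$, $t=1$ (forcing $\omega_1=|0\rl 0|$ and $\sigma'_1=|0\rl 0|$), $\eta=|+\rl +|$, $\rho'=|1\rl 1|$. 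Then
\begin{equation*}
\Omega'=\tfrac{1}{2}\left(|+\rl +|\otimes|1\rl 1|+|0\rl 0|\otimes|0\rl 0|\right)
\end{equation*}
is a mixture of two \emph{orthogonal} product states, so $H_{\min}(A|B)_{\Omega'}=0$: the dual variable $E=|00\rl 00|+\tfrac{1}{2}\bigl(|01\ra+|11\ra\bigr)\bigl(\la 01|+\la 11|\bigr)$ is feasible and gives $\tr[E\,\Omega']=1$, matched by the primal choice $\tau=\tfrac{1}{2}I_2$. Hence $R_{\eta,1}(\rho')=f_{\bs\omega}(\rho')=1$, whereas your claimed value is $\frac{\tr[\eta\rho']+t}{2}=\frac{3}{4}$. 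Consequently your argument only yields $(3)\Rightarrow(2)$ and $(4)\Rightarrow(2)$ (these use just the surviving lower bound), but not the converses: in the example, condition (2) bounds the min-entropy quantity by $\tfrac34$, which is strictly weaker than the bound $R_{\eta,1}(\rho')=1$ demanded by (3). The paper closes these directions with an ingredient absent from your proposal: having established $(2)\Rightarrow(1)$, one takes the RNG map $\mE$ with $\rho'=\mE(\rho)$ and applies $\Lambda=\id\otimes\mE$ to the optimizer $\Omega$ of $R_{\eta,t}(\rho)$; since $\mE(\sigma_\ell)\in\mF_{\rm out}$ and the $A$-marginal (hence $t$) is untouched, $\Lambda(\Omega)$ is feasible for $R_{\eta,t}(\rho')$, and data processing (monotonicity of $H_{\min}(A|B)$ under channels on $B$) gives $R_{\eta,t}(\rho)\geq 2^{-H_{\min}(A|B)_{\Lambda(\Omega)}}\geq R_{\eta,t}(\rho')$, and likewise for $f_{\bs\omega}$. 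Some such argument is indispensable for $(1)\Rightarrow(3)$ and $(1)\Rightarrow(4)$; it cannot be replaced by the false equality.
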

\begin{remark}
Note that both $R_{\eta,t}(\rho)$ and $f_{\bs{\omega}}(\rho)$ obtained by optimizing
$2^{-H_{\min}\left(A|B\right)_{\Omega_{\eta,\Theta}(\rho)}}$. The first one fixes $\eta$ with the optimization carried over all $\omega_1,...,\omega_n\in\mF_{\rm out}^{\star}$ with $t=r(\omega)$ (and $\sigma_1,...,\sigma_n$ are taken to be a fixed basis of $\mV_{\rm in})$, while the second one fixes $\bs{\omega}=\{\eta,\omega_1,...,\omega_n\}$ with the optimization carried over \emph{any} $\sigma_1,...,\sigma_n\in\mV_{\rm in}$.
\end{remark}

We start first by proving the following lemma:

\begin{lemma}\label{main}
Let $\mR(\mF_{\rm in},\mF_{\rm out},\mO)$ be an ART, and let $\mV_{\rm in}$, $\mV^{\perp}_{\rm out}$, and $\mO_{\max}$ be as above. Assuming $\mF_{\rm out}\neq\emptyset$,
let $\gamma\in\mF_{\rm out}$ be a free state, and let
$\rho\in\mH_{d,+,1}$ and $\rho'\in\mH_{d',+,1}$ be two density matrices. Denote by $\mV^{T}_{\rm in}\eqdef\{X^T\;|\;X\in\mV_{\rm in}\}$ the set of the transposed matrices of all the matrices in $\mV_{\rm in}$. 
Then, there exists $\mE\in\mO_{\max}$  
such that $\rho'=\mE(\rho)$ if and only if 
the matrix
\be\label{thmmain}
M^{AB}=-\tr[Y\rho']I_{d'}\otimes \tau+Y\otimes\rho^{T}+N^{AB}
\ee
is not positive definite, for any matrix $N^{AB}\in \mV^{\perp}_{\rm out}\otimes\mV^{T}_{\rm in}\subset\mH_{d'}\otimes\mH_d$, any $0<\tau\in\mH_{d,+,1}$, and any matrix $Y\in\mH_{d'}$ such that $\tr\left[Y\gamma\right]=0$.
\end{lemma}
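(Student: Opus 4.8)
The plan is to recast the existence of a free operation as a semidefinite feasibility problem for its Choi matrix and then dualize it with the Farkas lemma (Lemma~\ref{Farkas}).

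First I would pass to the Choi matrix $\sigma^{AB}=\mE\otimes\id(|\phi^+\lr\phi^+|)\ge 0$. As recalled in the main text, $\rho'=\mE(\rho)$ together with trace preservation is equivalent to the affine conditions $\tr_B[\sigma^{AB}(I_{d'}\otimes\rho^T)]=\rho'$ and $\tr_A[\sigma^{AB}]=I_d$, i.e. to \eqref{222} and \eqref{111}, while $\mE\in\mO_{\max}$ is equivalent to $\tr[\sigma^{AB}N]=0$ for all $N\in\mV^\perp_{\rm out}\otimes\mV^T_{\rm in}$, which is \eqref{rng2}. Thus the task is to decide whether a nonzero $\sigma^{AB}\ge 0$ obeying these three families of linear conditions exists.

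Second, to reach the homogeneous form required by Lemma~\ref{Farkas}, I would fix an auxiliary density matrix $\tau$ and homogenize: replace \eqref{111} by its traceless part $\tr[\sigma^{AB}(I_{d'}\otimes X_0)]=0$ for all traceless $X_0$, which forces $\tr_A\sigma^{AB}=\kappa I_d$; and replace \eqref{222} by $\tr[\sigma^{AB}(Y\otimes\rho^T-\tr[Y\rho']\,I_{d'}\otimes\tau)]=0$ for all $Y$. A nonzero positive semidefinite solution then has $\kappa>0$, and after rescaling by $1/\kappa$ it is a genuine Choi matrix of some $\mE\in\mO_{\max}$ with $\rho'=\mE(\rho)$; conversely any such Choi matrix solves the homogeneous system. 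Hence the existence of $\mE$ is equivalent to the existence of a nonzero $\sigma^{AB}\ge 0$ annihilated by the span $\mathcal{L}$ of all the constraint matrices, and by Lemma~\ref{Farkas} this is equivalent to the statement that no element of $\mathcal{L}$ is positive definite. A short computation absorbs the fixed $\tau$ and the traceless terms, so the generic element of $\mathcal{L}$ is $H=Y\otimes\rho^T+I_{d'}\otimes W+N$ with $Y\in\mH_{d'}$, $N\in\mV^\perp_{\rm out}\otimes\mV^T_{\rm in}$, and $W\in\mH_d$ subject only to $\tr[W]=-\tr[Y\rho']$.

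Third, I would match this description to the stated family $M^{AB}$, where $W=-\tr[Y\rho']\tau$ with $\tau$ a density matrix and $Y$ restricted by $\tr[Y\gamma]=0$. Restricting $Y$ is harmless: writing $Y=Y_0+\lambda I_{d'}$ with $\lambda=\tr[Y\gamma]$ moves $\lambda I_{d'}\otimes\rho^T$ into the $I_{d'}\otimes W$ block and leaves $H$ unchanged. The crux, and the step I expect to be the main obstacle, is to prove that testing only the restricted family suffices, i.e. that every positive-definite $H\in\mathcal{L}$ already has a positive-semidefinite block $W$. Here I would exploit that $\gamma\in\mF_{\rm out}\subset\mV_{\rm out}$ and contract system $A$ against $\gamma$: whenever $H>0$ the matrix $(\gamma^{1/2}\otimes I_d)H(\gamma^{1/2}\otimes I_d)$ is positive semidefinite, so its partial trace $\tr_A[(\gamma\otimes I_d)H]$ is positive semidefinite as well. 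Computing this term by term, the $Y$-term contributes $\tr[Y\gamma]\,\rho^T=0$ by the constraint on $Y$, and $\tr_A[(\gamma\otimes I_d)N]=\sum_k\tr[\gamma A_k]B_k=0$ because $\gamma\perp\mV^\perp_{\rm out}$, so the contraction equals exactly $W$. Therefore $W\ge0$, and since $\tr[W]=-\tr[Y\rho']$ I may write $W=-\tr[Y\rho']\tau$ for a density matrix $\tau$; an arbitrarily small full-rank perturbation of $\tau$ preserves $H>0$ by openness of the positive-definite cone, matching the requirement $0<\tau\in\mH_{d,+,1}$. Since every $M^{AB}$ conversely lies in $\mathcal{L}$, the two ``no positive-definite element'' conditions coincide, which is the claim.
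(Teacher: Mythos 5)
Your proposal is correct and follows essentially the same route as the paper's proof: pass to the Choi matrix, homogenize the affine constraints, apply the SDP Farkas lemma, and then reduce the resulting span condition to the stated family by contracting system $A$ against $\gamma\in\mF_{\rm out}$. The only cosmetic differences are that the paper homogenizes against $\frac{1}{d}I_d$ rather than a fixed auxiliary $\tau$, and it secures $\tau>0$ via a ``w.l.o.g.\ $L^B>0$'' step where you instead show $W\geq 0$ and invoke openness of the positive-definite cone to perturb $\tau$ to full rank (your variant is, if anything, slightly tidier when $\gamma$ is rank-deficient).
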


\begin{remark}
The condition that $M^{AB}$ is not positive definite can be written in terms of the min eigenvalue; that is, $\rho\xrightarrow{RNG}\rho'$ iff 
$\lambda_{\min}(M^{AB})\leq 0$ for all $N^{AB}$, $\tau$, and $Y$. Therefore,
for any choice of of matrices $N^{AB},\;\tau,\;Y$ the condition $-\lambda_{\min}(M^{AB})\geq 0$ is necessary and can be viewed as a ``no-go" conversion witness~\cite{Gir15,Gir16}. Therefore, the lemma above provides a complete set of no-go conversion witnesses determining whether or not the transformation $\rho\xrightarrow{RNG}\rho'$ is possible.
\end{remark}
\begin{remark}
From the form of $M^{AB}$ above it is not very obvious why this matrix is never positive definite if $\rho'=\mE(\rho)$ and $\mE\in\mO_{\max}$. To see why, note that any matrix $N^{AB}\in \mV^{\perp}_{\rm out}\otimes\mV^{T}_{\rm in}$ can be written as $N^{AB}=\sum_{k=1}^{d^2-m}Y_k\otimes A_{k}^{T}$, where the $Y_k$'s form a basis of $\mV^{\perp}_{\rm out}$ and the $A_k$'s are some matrices in $\mV_{\rm in}$.  If there exists $\mE\in\mO_{\max}$ such that $\rho'=\mE(\rho)$ then 
\begin{align}
\la\phi^+|\mE^{\dag}&\otimes\id\left(M^{AB}\right)|\phi^+\ra\nonumber\\
& =\sum_{k=1}^{d^2-m}\tr\left[Y_k\mE(A_k)\right]-\tr\left[Z\mE(\rho)\right]+1\nonumber\\
&=-\tr\left[Z\rho'\right]+1=0\;,
\end{align}
where we used the fact that $\mE(A_k)\in\mV_{\rm in}$ (and therefore $\tr[Y_k\mE(A_k)]=0$) since $A_k\in\mV_{\rm in}$ and $\mE\in\mO_{\max}$.
Hence, in this case $\tr\left[M^{AB}\left[\mE\otimes\id(|\phi^+\lr\phi^+|)\right]\right]=0$, and since $\mE\otimes\id(|\phi^+\lr\phi^+|)\geq 0$ we conclude that $M^{AB}$ is not positive definite (as expected).
\end{remark}
\begin{proof} (of the Lemma~\ref{main})
Denoting by $\sigma^{AB}=\mE\otimes\id (|\phi^+\lr\phi^+|)\in\mH_{d'd,+}$ the Choi matrix associated with $\mE$, where $|\phi^+\ra=\sum_{j=1}^{d}|jj\ra$ is the unnormalized maximally entangled state, the condition $\rho'=\mE(\rho)$
is equivalent to the existence of such a Choi matrix (of a free operation) that satisfies
\be
\rho'=\tr_B\left[\sigma^{AB}\left(I_{d'}\otimes\rho^{T}\right)\right]\;\text{ and }\;\tr_A\left[\sigma^{AB}\right]=I_d\;.
\ee
These equations are equivalent to
	\begin{align}
	&\tr\left[\sigma^{AB}\left(Y\otimes\rho^{T}\right)\right]=\tr\left[Y\rho'\right]\quad\forall\;Y\in\mathcal{H}_{d'}\label{a222}\\
	&\tr\left[\sigma^{AB}\left(I_{d'}\otimes X\right)\right]
	=\tr[X]\quad\forall\;X\in \mathcal{H}_d\label{a111}
	\end{align}
	Note that the two equations above are not completely independent. For example, if $Y=I_{d'}$ then~\eqref{a222} follows from~\eqref{a111}. Hence, w.l.o.g. we can assume that $Y\in\mathcal{H}_{d',0}$, where $\mathcal{H}_{d',0}\subset\mathcal{H}_{d'}$ is the subspace of traceless Hermitian matrices. Similarly, denoting by $Z\equiv X-\tr[X]\frac{1}{d}I_{d}$
	we get that the above two equations are equivalent to
	\begin{align}
	& \tr\left[\sigma^{AB}\left(Y\otimes\rho^{T}
	-\tr\left[Y\rho'\right]I_{d'}\otimes \frac{1}{d}I_{d}\right)\right]
	=0\label{acn2}\\
	& \tr\left[\sigma^{AB}\left(I_{d'}\otimes Z\right)\right]\label{acn1}
	=0\quad\\
	& \tr\left[\sigma^{AB}\right]=d
	\end{align}
	for all $Z\in \mathcal{H}_{d,0}$ and $Y\in\mathcal{H}_{d',0}$.
	Note that the equation $\tr\left[\sigma^{AB}\right]=d$ can be removed since if there exists positive semi-definite matrix $\sigma^{AB}\neq 0$ that satisfies conditions~\eqref{acn1} and~\eqref{acn2}, then the matrix $\frac{d}{\tr\left[\sigma^{AB}\right]}\sigma^{AB}$ satisfies all three conditions.
	Due to the linearity of the above equations with Y and Z, it is enough to consider only $Y\in\{Y_j\}_j$ and $Z\in\{Z_k\}_k$, where $\{Y_j\}_j$ and $\{Z_k\}_k$ are bases of $\mathcal{H}_{d',0}$ and $\mathcal{H}_{d,0}$, respectively. We therefore conclude that for all 
$j=1,...,d'{}^2-1$ and for all $k=1,...,d^2-1$
\begin{align}\label{sdp1}
& \tr\left[\sigma^{AB}\left(Y_j\otimes\rho^{T}
	-\frac{1}{d}\tr\left[Y_j\rho'\right]I_{d'}\otimes I_{d}\right)\right]
	=0\nonumber\\
&\tr\left[\sigma^{AB}(I \otimes Z_k)\right]=0\;,
\end{align}
The conditions in~\eqref{sdp1} can be written as a collection of equalities $\tr[\sigma^{AB}H_j]=0$, for some Hermitian matrices $H_j\in\mH_{d'}\otimes\mH_{d}$.

In addition to the above conditions, there are constraints on the Choi matrix $\sigma^{AB}$ that comes from the fact that $\mE$ is a \emph{free} operation. Particularly, if $\mE\in\mO_{\max}$ then $\mE(\sigma)\in\mF_{\rm out}$ for all $\sigma\in\mF_{\rm in}$. From the linearity of $\mE$, we have
$\mE(X)\in\mV_{\rm out}$ for all $X\in\mV_{\rm in}$. Therefore, if 
$\mE\in\mO_{\max}$ then 
\be\label{rng3}
\tr\left[Y\mE(X)\right]=0\quad\forall\;X\in\mV_{\rm in}\;\;\text{ and }\;\;\forall\;Y\in\mV^{\perp}_{\rm out}\;,
\ee
where $\mV^{\perp}_{\rm out}$ is the orthogonal complement of 
$\mV_{\rm out}$ in $\mH_{d'}$.
In the Choi representation, the condition above take the form
\be\label{sdp3}
\tr\left[\sigma^{AB}Y_{j}\otimes X_{k}^{T}\right]=0
\ee
for all $j=1,...,\dim\mV^{\perp}_{\rm out}$ and $k=1,...,\dim\mV_{\rm in}$, where the set $\{X_k\}$ form a basis of $\mV_{\rm in}$, and $\{Y_j\}$ a basis for $\mV^\perp_{\rm out}$. Combining these conditions with the ones in~\eqref{sdp1} we apply the Farkas lemma.
To do that, note that a linear combination of the matrices $Y_{j}\otimes X_{k}^{T}$ provides a matrix $N^{AB}\in\mV^{\perp}_{\rm out}\otimes\mV^{T}_{\rm in}$. Similarly, any linear combination of $Y_k\otimes\rho^{T}
	-\frac{1}{d}\tr\left[Y_k\rho'\right]I_{d'}\otimes I_{d}$ is a matrix of the form 
	$$
	W\otimes\rho^{T}
	-\frac{1}{d}\tr\left[W\rho'\right]I_{d'}\otimes I_{d}
	$$ with $W\in\mH_{d',0}$, and any linear combination of $I_{d'}\otimes Z_j$ is a matrix of the form $I_{d'}\otimes Z$ with $Z\in\mH_{d,0}$. We therefore conclude from the Farkas lemma that there exists a Choi matrix $\sigma^{AB}$ that satisfies Eqs.~(\ref{sdp1},\ref{sdp3}) if and only if for any matrices $N^{AB}\in\mV^{\perp}_{\rm out}\otimes\mV^{T}_{\rm in}$, and $W\in\mH_{d',0}$ and $Z\in\mH_{d,0}$ the matrix
	\be
	M^{AB}\equiv N^{AB}+W\otimes\rho^{T}
	-\frac{1}{d}\tr\left[W\rho'\right]I_{d'}\otimes I_{d}+I_{d'}\otimes Z
	\ee
	is not positive definite. Let $\gamma\in\mF_{\rm out}$. Then, $M^{AB}$ is not positive definite if $L^{AB}\equiv\left(\gamma^{1/2}\otimes I_d\right) M^{AB}\left(\gamma^{1/2}\otimes I_d\right)$ is not positive definite. Moreover, note that the matrix $N^{AB}$ can be expressed as $\sum_{\ell}H_{\ell}\otimes\sigma_{\ell}^{T}$ with $\sigma_{\ell}\in\mF_{\rm in}=\spa_{\mbb{R}}\{\sigma_1,...,\sigma_n\}$ and $H_{\ell}\in\mV^{\perp}_{\rm out}$. With these notations we get
	\begin{align}
	L^{AB}&=\sum_{\ell}\gamma^{1/2}H_{\ell}\gamma^{1/2}\otimes\sigma_{\ell}^{T}+\gamma^{1/2}W\gamma^{1/2}\otimes\rho^T\nonumber\\
	&-\frac{1}{d}\tr\left[W\rho'\right]\gamma\otimes I_{d}+\gamma\otimes Z
	\end{align}
	In particular, the marginal state takes the form
	\be
	L^B=\tr[W\gamma]\rho^T-\frac{1}{d}\tr\left[W\rho'\right]I_{d}+Z
	\ee
	Note that if we choose $W$ and $Z$ such that $L^B$ is not positive definite, then $L^{AB}$ is also not positive definite.
	We therefore assume w.l.o.g. that $L^{B}>0$. In particular, $\tr\left[L^B\right]>0$ so that 
$\tr\left[W(\gamma-\rho')\right]>0$. Denoting by 
\be
\tau\equiv	\frac{\tr[W\gamma]\rho^T-\frac{1}{d}\tr\left[W\rho'\right]I_{d}+Z}{\tr\left[W(\gamma-\rho')\right]}>0
\ee
we get that $M^{AB}$ can be expressed as
\begin{align}
	M^{AB}&=\sum_{\ell}H_{\ell}\otimes\sigma_{\ell}^{T}+
	\left(W-\tr[W\gamma]I_{d'}\right)\otimes\rho^T\nonumber\\
	&+\tr\left[W(\gamma-\rho')\right]I_{d'}\otimes \tau
	\end{align}
	Next, denoting by $Y\equiv W-\tr[W\gamma]I_{d'}\in\gamma^{\perp}$ (here $\gamma^\perp\equiv\{X\in\mH_{d'}\;:\;\tr[X\gamma]=0\}$), we get
	\begin{align}
	M^{AB}=N^{AB}+
	Y\otimes\rho^T
	-\tr\left[Y\rho'\right]I_{d'}\otimes \tau\;.
\end{align}
This completes the proof of Lemma~\ref{main}.
\end{proof}

The condition in the Lemma above is given in terms of Hermitian matrices $N^{AB}$ and $X$. In the following we prove Theorem~\ref{general} by expressing this lemma in terms of density matrices.
 
\begin{proof} (of Theorem~\ref{general}): 
We start proving the equivalence of 1 and 2. Let $M^{AB}$ be the matrix defined in Lemma~\ref{main}:
\be
M^{AB}=-\tr[Y\rho']I_{d'}\otimes \tau+Y\otimes\rho^{T}+\sum_{\ell}H_{\ell}\otimes\sigma_{\ell}^{T}\nonumber
\ee
with $Y\in\mH_{d'}$ such that $\tr[Y\gamma]=0$.
For all $\ell=1,...,n$, define the traceless matrices
\be
F_{\ell}=H_{\ell}-\tr[H_\ell]u_{d'}\quad\text{and}\quad Z=Y-\tr[Y]u_{d'}\;.
\ee
where $u_{d'}\equiv\frac{1}{d'}I_{d'}$.
Equivalently,
\be
H_\ell=F_\ell-\tr[F_\ell\gamma]I_{d'}\quad\text{and}\quad Y=Z-\tr[Z\gamma]I_{d'}\;.
\ee
Note that for all $\sigma\in\mF_{\rm out}$ we have $\tr[H_\ell\sigma]=0$ which is equivalent to $\tr[F_\ell\sigma]=\tr[F_\ell\gamma]$.
Hence, in terms of these traceless matrices
\begin{align}
M^{AB}& =\tr[Z(\gamma-\rho')]I_{d'}\otimes \tau+(Z-\tr[Z\gamma]I_{d'})\otimes\rho^{T}\nonumber\\
&+\sum_{\ell}\left(F_\ell-\tr[F_\ell\gamma]I_{d'}\right)\otimes\sigma_{\ell}^{T}\;.
\end{align}
W.l.o.g. we can assume that $Z,F_\ell\leq u_{d'}$ since rescaling of $M^{AB}$ by a positive factor does not change the signs of its eigenvalues.  Therefore, we set for all $\ell=1,...,n$
\be
\omega_{\ell}=u_d-F_\ell\quad\text{and}\quad\eta=u_d-Z\;,
\ee
with the $\omega_\ell$ satisfies $\tr[\omega_\ell\sigma]=\tr[\omega_\ell\gamma]$ for all $\sigma\in\mF_{\rm in}$.
With these notations we get
\begin{align}
M^{AB}& =\tr[\eta(\rho'-\gamma)]I_{d'}\otimes \tau+(\tr[\eta\gamma]I_{d'}-\eta)\otimes\rho^{T}\nonumber\\
&+\sum_{\ell}\left(\tr[\omega_\ell\gamma]I_{d'}-\omega_\ell\right)\otimes\sigma_{\ell}^{T}\;.
\end{align}

Finally, rescaling $M^{AB}\to\frac{1}{n+1}M^{AB}$ we conclude:
\be\label{expr}
	M^{AB}=\frac{\tr[\eta(\rho'-\gamma)]}{n+1}I_{d'}\otimes \tau+I_{d'}\otimes\Omega^{B}_{\gamma}-\Omega^{AB}
\ee
where for simplicity we denote $\Omega^{AB}\equiv\Omega_{\bs{\omega}}^{AB}(\rho)$ and
\be
\Omega^{B}_{\gamma}\eqdef\tr_{A}\left[\left(\gamma^{1/2}\otimes I_d\right)\Omega^{AB}\left(\gamma^{1/2}\otimes I_d\right)\right]
\ee
Next, note that the conditional min entropy can be expressed as:
\begin{align*}
& 2^{-H_{\min}(A|B)_\Omega} =\inf_{\tau\geq 0}
\left\{\tr[\tau]\;\Big|\;I_{d'} \otimes \tau\geq \Omega^{AB}\right\}\nonumber\\
&=\inf_{\tau\geq \Omega^{B}_{\gamma}}
\left\{\tr[\tau]\;\Big|\;I_{d'}\otimes \tau\geq \Omega^{AB}\right\}\nonumber\\
&=\inf_{\tau'\geq 0}
\left\{\tr\left[\Omega^{B}_{\gamma}\right]+\tr[\tau']\;\Big|\;I_{d'}\otimes \left(\tau'+\Omega^{B}_{\gamma}\right)\geq \Omega^{AB}\right\}
\end{align*}
where in the second equality we used that fact that if $I_{d'} \otimes \tau\geq \Omega^{AB}$ then $\tau\geq\Omega^{B}_{\gamma}$, and in the third equality we substitute $\tau'=\tau-\Omega^{B}_{\gamma}$.
Hence,
\begin{align}
&2^{-H_{\min}(A|B)_\Omega}-\tr\left[\Omega^{AB}\left(\gamma\otimes I_d\right)\right]\nonumber\\
&=\inf_{\tau'\geq 0}
\left\{\tr[\tau']\;\Big|\;I_{d'}\otimes \tau'+I_{d'}\otimes\Omega^{B}_{\gamma}- \Omega^{AB}\geq 0\right\}\;.
\end{align}
Comparing this last equality with the expression for $M^{AB}$ in~\eqref{expr} we conclude that $M^{AB}$ is not positive definite if and only if
\be\label{main33}
2^{-H_{\min}(A|B)_\Omega}-\tr\left[\Omega^{AB}\left(\gamma\otimes I_d\right)\right]
\geq \frac{\tr[\eta(\rho'-\gamma)]}{n+1}\;.
\ee
Hence,
\be
2^{-H_{\min}(A|B)_\Omega}
\geq \frac{1}{n+1}\left(\sum_{\ell=1}^{n}\tr[\omega_\ell\gamma]+\tr[\eta\rho']\right)\;.
\ee
This completes the proof that 1 is equivalent to 2.
We now prove that 1 is equivalent to 3. The necessity of~\eqref{for22} follows from the following monotonicity property of the conditional min entropy. The conditional min entropy behaves monotonically under CPTP maps $\Lambda$ that satisfy:
\be
\Lambda(u_{d'}\otimes \sigma^B)=u_{d'}\otimes\tr_A[\Lambda\left(u_{d'}\otimes \sigma^B\right)]
\;,
\ee
for all $\sigma^B\in\mH_{d,+,1}$
Therefore, if there exists a RNG map $\mE$ that satisfies $\rho'=\mE(\rho)$ then the map $\Lambda\equiv\id\otimes \mE$ (with $T$ being the transpose map) is a CPTP map that satisfies the above equation. Moreover,
\be
\Lambda\left(\Omega^{AB}\right)=\frac{1}{n+1}\left(\eta\otimes\rho'+\sum_{\ell=1}^{n}\omega_{\ell}^{T}\otimes\mE(\sigma_{\ell})\right)
\ee
with $\mE(\sigma_\ell)\in\mF_{\rm out}$ since $\mE$ is a RNG map.
Therefore, taking $\Omega^{AB}$ to be the optimal matrix in Eq.~\eqref{gg} (see Def.~\ref{Def2} in the main text), we conclude that
$$
R_{\eta,t}(\rho)=2^{-H_{\min}(A|B)_{\Omega}}\geq 2^{-H_{\min}(A|B)_{\Lambda(\Omega)}}\geq R_{\eta,t}(\rho')
$$
so that the condition~\eqref{for22} is necessary.
The sufficiency of the condition follows from the following duality relation of the conditional min entropy that was proved in~\cite{ Kon09}:
\begin{align}
2^{-H_{\min}(A|B)_{\Omega'}}&=d'\max_{\mE}\left(\la\phi^+|\id\otimes\mE\left(\Omega'{}^{AB}\right)|\phi^+\ra\right)\nonumber\\
& \geq d'\la\phi^+|\Omega'{}^{AB}|\phi^+\ra
\end{align}
for any $d'^2\times d'^2$ separable density matrix $\Omega'{}^{AB}$ of the form~\eqref{newomega} with $\rho$ replaced by $\rho'$ and $\Theta'{}^{AB}\in\mS^{\rm out}_{t}$.
Letting $\Omega'{}^{AB}$ be
the one that optimizes $R_{\eta,t}(\rho')$, we obtain
\be
R_{\eta,t}(\rho)\geq R_{\eta,t}(\rho')=2^{-H_{\min}(A|B)_{\Omega'}}\geq
d'\la\phi^+|\Omega'{}^{AB}|\phi^+\ra\nonumber
\ee 
Thus, for any $\Omega^{AB}=\Omega_{\bs{\omega}}^{AB}(\rho)$  we have
\be
2^{-H_{\min}(A|B)_{\Omega_{\bs{\omega}}(\rho)}}\geq R_{\eta,t}(\rho)\geq  d'\la\phi^+|\Omega'{}^{AB}|\phi^+\ra
=\frac{\tr[\eta\rho']+nr(\omega)}{n+1}\;.
\ee
Hence, condition~\eqref{for1} holds for any $\bs{\omega}$ as above, 
and from the proof of the equivalence of 1 and 2 we
get the existence of a RNG map $\mE$ that satisfies $\rho'=\mE(\rho)$. 
This completes the proof that 1 and 3 are equivalently. To prove that 1 is equivalent to 4, we follow the exact same lines as we did in the proof of the equivalence of 1 and 3. This is possible since both $R_{\eta,t}(\rho)$ and $f_{\bs{\omega}}(\rho)$ obtained by optimizing
$2^{-H_{\min}\left(A|B\right)_{\Omega_{\eta,\Theta}(\rho)}}$. This completes the proof of Theorem~\ref{general}.
\end{proof}

\section{One more Application of Theorem~\ref{general}}

{\it ARTs with a free maximally mixed state}. 
In this case, we assume $u_{d'}\equiv\frac{1}{d'}I_{d'}\in\mF_{\rm out}$ so that
\be\label{ggg}
\mF_{\rm out}^{\star}\eqdef\left\{\omega\in\mH_{d',+,1}\;|\;\tr\left[\omega\sigma\right]=\frac{1}{d'}\;\;\forall\;\sigma\in\mF_{\rm out}\right\}\;,
\ee
and we therefore have:
\begin{corollary}\label{firstthm}
Using the same notations of Theorem~\ref{general}, suppose $u_{d'}\in\mF_{\rm out}$. Then, the map $\rho\to\rho'$ can be achieved by RNG operations if and only if 
\be\label{habmain1}
2^{-H_{\min}\left(A|B\right)_{\Omega_{\bs{\omega}}(\rho)}} \geq \frac{n+d\tr[\eta\rho']}{d(n+1)}\;,
\ee
for all separable bipartite matrices $\Omega^{AB}_{\bs{\omega}}(\rho)$ as in Theorem~\ref{general}
with $\sigma_\ell\in\mF_{\rm in}$, $\omega_\ell\in\mF_{\rm out}^{\star}$, and $\eta\in\mH_{d',+,1}$. 
\end{corollary}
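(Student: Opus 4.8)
The plan is to obtain Corollary~\ref{firstthm} as an immediate specialization of the equivalence of conditions 1 and 2 in Theorem~\ref{general}; the only genuine step is to evaluate the function $g$ on the dual set $\mF_{\rm out}^{\star}$ under the extra hypothesis $u_{d'}\in\mF_{\rm out}$. First I would record how this hypothesis constrains the dual set. By Property~1 of the dual-set theorem the maximally mixed state always belongs to $\mF_{\rm out}^{\star}$, and now that $u_{d'}\in\mF_{\rm out}$ as well, the defining relation $\tr[\omega\sigma]=\tr[\omega\sigma']$ valid for all $\sigma,\sigma'\in\mF_{\rm out}$ may be tested against $\sigma'=u_{d'}$. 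This yields exactly the characterization~\eqref{ggg}: every $\omega\in\mF_{\rm out}^{\star}$ obeys $\tr[\omega\sigma]=1/d'$ for all $\sigma\in\mF_{\rm out}$.

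Next I would observe that this collapses $g$ to a constant. Since $g(\omega)=\tr[\omega\sigma]$ for an arbitrary $\sigma\in\mF_{\rm out}$, choosing $\sigma=u_{d'}$ gives $g(\omega)=\tr[\omega u_{d'}]=\tfrac1{d'}\tr[\omega]=\tfrac1{d'}$ for every $\omega\in\mF_{\rm out}^{\star}$. In particular the average $\omega\equiv\frac1n\sum_{j=1}^{n}\omega_j$, which lies in $\mF_{\rm out}^{\star}$ by the convexity noted in the remark following Theorem~\ref{general}, has $g(\omega)=1/d'$ independently of the $\omega_j$. Thus the range $g(\mF_{\rm out}^{\star})$ degenerates to the single point $\{1/d'\}$, so the parameter $t$ appearing in condition 3 is forced and the monotones become indexed effectively by $\eta$ alone.

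The remaining step is pure substitution into condition 2. Replacing $g(\omega)$ by $1/d'$ in~\eqref{for1} turns that inequality into
\be
2^{-H_{\min}(A|B)_{\Omega_{\bs{\omega}}(\rho)}}\ge\frac{\tr[\eta\rho']+n/d'}{n+1}=\frac{n+d'\,\tr[\eta\rho']}{d'(n+1)}\;,
\ee
which is precisely~\eqref{habmain1} in the dimension-preserving setting $d=d'$ that is natural for theories possessing a free maximally mixed state. Because Theorem~\ref{general} already proved that condition 2 holds for every admissible $\bs{\omega}$ if and only if there is some $\mE\in\mO_{\max}$ with $\rho'=\mE(\rho)$, the ``if and only if'' of the corollary follows at once from this specialized form.

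I expect no real obstacle, since all the analytic weight sits in Theorem~\ref{general}. The single point needing care is the upgrade from ``$g=1/d'$ at the maximally mixed state'' to ``$g\equiv1/d'$ on the entire dual set'', which is exactly what the characterization~\eqref{ggg} provides; and one must confirm that the averaged witness $\omega$ stays inside $\mF_{\rm out}^{\star}$ so that $g(\omega)$ is defined, which is guaranteed by convexity of the dual set.
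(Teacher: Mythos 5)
Your proposal is correct and takes essentially the same route as the paper: the paper also derives this corollary by noting that $u_{d'}\in\mF_{\rm out}$ collapses the dual set to $\mF_{\rm out}^{\star}=\{\omega\in\mH_{d',+,1}\;|\;\tr[\omega\sigma]=1/d'\;\;\forall\sigma\in\mF_{\rm out}\}$, so that $g\equiv 1/d'$, and then substituting this constant into condition 2 of Theorem~\ref{general}. Your remark that the resulting bound $\frac{n+d'\tr[\eta\rho']}{d'(n+1)}$ agrees with~\eqref{habmain1} only under $d=d'$ is also apt --- the corollary as printed tacitly writes $d$ in place of $d'$, consistent with the coherence example that follows it.
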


\begin{example}
The resource theory of coherence. In this theory the set of free states are the diagonal states with respect to some fixed basis. The set $\mF_{\rm out}^{\star}$ becomes
\be
\mF_{\rm out}^{\star}=\left\{\omega\in\mH_{d',+,1}\;|\;\Delta(\omega)=\frac{1}{d'}I_{d'}\right\}\;,
\ee
where $\Delta$ is the completely dephasing map. Moreover, since every free density matrix is a convex combination of the states $|\ell\lr\ell|$, the matrix $\Omega^{AB}$ can be written as
\be
\Omega^{AB}=\frac{1}{d+1}\left(\sum_{\ell=1}^{d}\omega_\ell\otimes |\ell\lr\ell|+\eta\otimes\rho^{T}\right)\;,
\ee
where $\omega_1,...,\omega_d\in\mF_{\rm out}^{\star}$ are density matrices with a uniform diagonal. 
For any such set of density matrices $\boldsymbol{\omega}\equiv(\omega_1,...,\omega_d)$ and any $\eta\in\mH_{d',+,1}$ we define the functions (no-go witnesses)
\be
W_{\boldsymbol{\omega},\eta}(\rho,\rho')=2^{-H_{\min}\left(A|B\right)_\Omega} - \frac{1+\tr[\eta\rho']}{d+1}\;.
\ee
We therefore arrive at the following corollary:
\begin{corollary}
Using the same notations as above, $\rho$ can be converted into $\rho'$ with maximally incoherent operations (MIO)~\cite{CG16} if and only if
\be
W_{\boldsymbol{\omega},\eta}(\rho,\rho')\geq 0
\ee
for all $\boldsymbol{\omega}\equiv(\omega_1,...,\omega_d)$ with 
$\omega_{\ell}\in\mF_{\rm out}^{\star}$ and for all $\eta\in\mH_{d',+,1}$.
\end{corollary}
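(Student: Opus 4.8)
The plan is to read off this Corollary as the specialization of Corollary~\ref{firstthm} to the resource theory of coherence, so that essentially no new argument is needed beyond identifying the coherence-specific data and simplifying the general bound~\eqref{habmain1}. Throughout I take input and output to be the same $d$-dimensional system with a fixed basis $\{|\ell\rangle\}_{\ell=1}^{d}$, so that $d'=d$.

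First I would check that coherence is an ART of the type covered by Corollary~\ref{firstthm}. The incoherent (free) states are exactly the density matrices diagonal in $\{|\ell\rangle\}$; any affine combination of diagonal matrices is again diagonal, hence if it is also a density matrix it is incoherent, so $\mF_{\rm in}$ and $\mF_{\rm out}$ are affine. By definition the RNG maps $\mO_{\max}$ are the CPTP maps sending incoherent states to incoherent states, i.e. the MIO of~\cite{CG16}. Since $u_d=\frac1d I_d$ is diagonal we have $u_d\in\mF_{\rm out}$, which is precisely the hypothesis of Corollary~\ref{firstthm}; thus that corollary applies verbatim.

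Next I would determine the three ingredients entering~\eqref{habmain1}. The span $\mV_{\rm in}$ is the space of diagonal Hermitian matrices, so $n\equiv\dim\mV_{\rm in}=d$, and a basis of free states is $\sigma_\ell=|\ell\rangle\langle\ell|$. The dual set follows from its definition together with $u_d\in\mF_{\rm out}$: for $\omega\in\mF_{\rm out}^{\star}$ and any free $\sigma$ one has $\tr[\omega\sigma]=\tr[\omega u_d]=\tfrac1d$, and testing against $\sigma=|\ell\rangle\langle\ell|$ gives $\langle\ell|\omega|\ell\rangle=\tfrac1d$ for every $\ell$; hence $\mF_{\rm out}^{\star}=\{\omega:\Delta(\omega)=\tfrac1d I_d\}$ is exactly the set of states with uniform diagonal, as asserted. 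With $\sigma_\ell=|\ell\rangle\langle\ell|$ the state of Definition~\ref{Def2} becomes $\Omega^{AB}=\tfrac1{d+1}\bigl(\sum_{\ell}\omega_\ell\otimes|\ell\rangle\langle\ell|+\eta\otimes\rho^{T}\bigr)$; here I would remark that the transposes in Definition~\ref{Def2} are harmless because $\sigma_\ell^{T}=\sigma_\ell$, because $H_{\min}(A|B)$ is invariant under a global transpose (the optimizer maps $\tau\mapsto\tau^{T}$ while the operator inequality and $\tr[\tau]$ are preserved), and because the ranges of $\eta$ and of the uniform-diagonal $\omega_\ell$ are transpose-invariant.

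Finally I would substitute $n=d$ into the right-hand side of~\eqref{habmain1}, obtaining $\frac{n+d\tr[\eta\rho']}{d(n+1)}=\frac{d+d\tr[\eta\rho']}{d(d+1)}=\frac{1+\tr[\eta\rho']}{d+1}$, so that~\eqref{habmain1} reads $2^{-H_{\min}(A|B)_\Omega}\geq\frac{1+\tr[\eta\rho']}{d+1}$, which is exactly $W_{\bs\omega,\eta}(\rho,\rho')\geq 0$. Corollary~\ref{firstthm} then yields the claimed equivalence between the existence of an MIO map $\rho\to\rho'$ and the nonnegativity of all witnesses $W_{\bs\omega,\eta}$. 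Because every conceptual step is inherited from Theorem~\ref{general} and Corollary~\ref{firstthm}, I anticipate no genuine difficulty; the only points requiring care are the verification that $\mF_{\rm out}^{\star}$ is the uniform-diagonal set and the bookkeeping of the transpose conventions, both of which are routine.
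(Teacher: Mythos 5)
Your proposal is correct and follows essentially the same route as the paper: the paper also obtains this corollary by specializing Corollary~\ref{firstthm} to the coherence theory, identifying $n=d$, $\sigma_\ell=|\ell\rangle\langle\ell|$, $\mF_{\rm out}^{\star}$ as the uniform-diagonal states, and simplifying $\frac{n+d\tr[\eta\rho']}{d(n+1)}$ to $\frac{1+\tr[\eta\rho']}{d+1}$. Your extra care with the transpose conventions (global transpose invariance of $H_{\min}(A|B)$ and transpose-invariance of the diagonal $\sigma_\ell$) is a welcome clarification of a point the paper glosses over, but it does not change the argument.
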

\end{example}

\section{Self-Dual sets of free operations}
Definition~\ref{selfdual1} can be rewritten as follows:
\begin{definition*}
Let $\mR(\mF_{{\rm in}},\mF_{{\rm out}},\mO)$ be an ART with $\mO\subset\mO_{\max}$ a set of free CPTP maps from $\mH_{d}$ to $\mH_{d'}$. We say that $\mO$
is self-dual if for any $\mE\in\mO$, with $\mE:\mH_{d}\to\mH_{d'}$
\be\label{selfdual}
\tr\left[Y'\mE(X)\right]=0\quad\text{and}\quad\tr\left[X'\mE(Y)\right]=0
\ee
for all $X\in\mV_{\rm in}$, $X'\in\mV_{\rm out}$, $Y\in\mV^{\perp}_{\rm in}$, and $Y'\in\mV^{\perp}_{\rm out}$. Moreover, we denote by $\mO_{\rm sd}$ the set of all CPTP maps $\mE\in\mO_{\max}$ that satisfy the above equations. 
\end{definition*} 

We have shown that for ARTs, the condition that $\mE$ is RNG can be expressed as in Eq.(\ref{rng2}).
Therefore, the dual map, $\mE^\dag$, of a RNG map $\mE\in\mO_{\max}$ satisfies
\be\label{dualrng}
\tr\left[X\mE^{\dag}(Y)\right]=0\quad\forall\;X\in\mV_{\rm in}\;\;\text{ and }\;\;\forall\;Y\in\mV^{\perp}_{\rm out}\;.
\ee
Hence, in general $\mE$ and $\mE^\dag$ satisfies two different conditions. However, if $\mO$ is self-dual, and $\mE\in\mO$, then both $\mE$ and $\mE^\dag$ satisfy the same conditions given in~\eqref{selfdual}. 

For ARTs with a resource destroying map~\cite{Liu}, $\Delta: \mH_{d}\to\mH_{d}$, the conditions given in Eq.~(\ref{selfdual}) takes the following simple form:
\be\label{commuting}
\Delta\circ\mE=\mE\circ\Delta\;.
\ee 
That is, $\mO_{\rm sd}$ in ARTs with a resource destroying map is precisely the set of $\Delta$-commuting maps and in~\cite{CG16,Marv16} were referred to as {\it $\Delta$-covariant operations}. 

Since the conditions in~\eqref{selfdual} can be expressed in the form~\eqref{farkas}, Lemma~\ref{Farkas} implies the following:

\begin{proposition}\label{maindual}
Let $\mR(\mF_{{\rm in}},\mF_{{\rm out}},\mO)$ be an ART with a self-dual set of free operations $\mO$, and let $\mV_{\rm in}$, $\mV^{\perp}_{\rm in}$, and $\mO_{\rm sd}$ be as above. Assuming $\mF_{\rm in}\neq\emptyset$,
let $\gamma'\in\mF_{\rm out}$ and $\gamma\in\mF_{\rm in}$ be free states, and let
$\rho\in\mH_{d,+,1}$ and $\rho'\in\mH_{d',+,1}$ be two density matrices. Denote by $\mV^{ T}_{\rm in}\eqdef\{X^T\;|\;X\in\mV_{\rm in}\}$ the set of the transposed matrices of all the matrices in $\mV_{\rm in}$. 
Then, there exists $\mE\in\mO_{\rm sd}$  
such that $\rho'=\mE(\rho)$ if and only if 
the matrix
\be
M^{AB}=-\tr[Y\rho']I_{d'}\otimes \tau+Y\otimes\rho^{T}+N^{AB}
\ee
is not positive definite, for any matrix $N^{AB}\in\left(\mV^{\perp}_{\rm out}\otimes\mV^{ T}_{\rm in}\right)\oplus\left(\mV_{\rm out}\otimes(\mV^{\perp}_{\rm in})^T\right)$, any $0<\tau\in\mH_{d,+,1}$, and any matrix $Y\in\mH_{d'}$ such that $\tr\left[Y\gamma\right]=0$.
\end{proposition}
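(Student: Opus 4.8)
The plan is to retrace the proof of Lemma~\ref{main} almost verbatim, feeding one additional family of linear constraints into the same Farkas argument. Writing $\sigma^{AB}=\mE\otimes\id(|\phi^+\lr\phi^+|)$ for the Choi matrix of $\mE$, the requirement $\rho'=\mE(\rho)$ together with trace preservation again yields the conditions \eqref{sdp1}, and the first self-dual condition in \eqref{selfdual}, $\tr[Y'\mE(X)]=0$ with $X\in\mV_{\rm in}$, $Y'\in\mV^{\perp}_{\rm out}$, is exactly the RNG constraint \eqref{rng3}, i.e. $\tr[\sigma^{AB}(Y'\otimes X^T)]=0$. The single new ingredient is the second self-dual condition $\tr[X'\mE(Y)]=0$ with $X'\in\mV_{\rm out}$, $Y\in\mV^{\perp}_{\rm in}$. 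Since $\mE(Y)=\tr_B[\sigma^{AB}(I_{d'}\otimes Y^T)]$, this reads $\tr[\sigma^{AB}(X'\otimes Y^T)]=0$, so that $\sigma^{AB}$ is now required, in addition, to be orthogonal to $\mV_{\rm out}\otimes(\mV^{\perp}_{\rm in})^T$.

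I would then collect all constraints as equalities $\tr[\sigma^{AB}H_j]=0$ and apply Lemma~\ref{Farkas}: a nonzero $\sigma^{AB}\ge 0$ implementing the conversion exists if and only if no real combination $\sum_j r_j H_j$ is positive definite. The two orthogonality families span $\mV^{\perp}_{\rm out}\otimes\mV^{T}_{\rm in}$ and $\mV_{\rm out}\otimes(\mV^{\perp}_{\rm in})^T$; because $\mV^{\perp}_{\rm out}$ and $\mV_{\rm out}$ are orthogonal complements on the $A$ factor, these two subspaces are themselves orthogonal and their sum is the direct sum appearing in the statement. Hence the combination $N^{AB}$ now ranges over $(\mV^{\perp}_{\rm out}\otimes\mV^{T}_{\rm in})\oplus(\mV_{\rm out}\otimes(\mV^{\perp}_{\rm in})^T)$, while the transformation constraints \eqref{sdp1} contribute precisely the same terms $W\otimes\rho^{T}-\frac{1}{d}\tr[W\rho']I_{d'}\otimes I_d+I_{d'}\otimes Z$ as before.

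What remains is to repackage those transformation terms into $-\tr[Y\rho']I_{d'}\otimes\tau+Y\otimes\rho^{T}$ with $\tau>0$ and $\tr[Y\gamma']=0$, where $\gamma'\in\mF_{\rm out}$. The substitution $Y=W-\tr[W\gamma']I_{d'}$ and $\tau\propto\tr[W\gamma']\rho^{T}-\frac{1}{d}\tr[W\rho']I_d+Z$ is an algebraic identity that never touches $N^{AB}$, so it carries over unchanged; the only delicate point is that restricting to $\tau>0$ costs no generality. In Lemma~\ref{main} this held because the numerator of $\tau$ coincided with the marginal $L^B=\tr_A[((\gamma')^{1/2}\otimes I)M^{AB}((\gamma')^{1/2}\otimes I)]$, and a partial trace of a positive matrix is positive.

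This is the step where I expect the real difficulty, and where the input free state $\gamma\in\mF_{\rm in}$ must enter. With the enlarged $N^{AB}$, its summand in $\mV_{\rm out}\otimes(\mV^{\perp}_{\rm in})^T$ no longer drops out of $L^B$: conjugating by $(\gamma')^{1/2}$ annihilates the $\mV^{\perp}_{\rm out}$ part but not the $\mV_{\rm out}$ part, so the numerator of $\tau$ and the genuine marginal $L^B$ differ by a stray term $R\in(\mV^{\perp}_{\rm in})^T$. My plan is to exploit the fact that, via $\mE\mapsto\mE^{\dag}$ (cf.~\eqref{dualrng}), the second self-dual constraint is the input/output mirror of the RNG constraint, so the input system can be placed on the same footing as the output: conjugating the $B$ factor by $\gamma^{1/2}$ should kill $R$ in the relevant marginal and re-absorb it into the $\mV_{\rm out}\otimes(\mV^{\perp}_{\rm in})^T$ component of $N^{AB}$. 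Once positivity of $\tau$ is re-established in this way, the remaining manipulations are identical to those in Lemma~\ref{main} and deliver the asserted form of $M^{AB}$.
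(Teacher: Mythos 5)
Your first two paragraphs are sound and match what the paper intends (the paper actually skips this proof, asserting it ``follows the exact same lines'' as Lemma~\ref{main}): the extra self-dual constraint $\tr[X'\mE(Y)]=0$ indeed reads $\tr[\sigma^{AB}(X'\otimes Y^T)]=0$ on the Choi matrix, the two constraint families span the orthogonal direct sum $\left(\mV^{\perp}_{\rm out}\otimes\mV^{T}_{\rm in}\right)\oplus\left(\mV_{\rm out}\otimes(\mV^{\perp}_{\rm in})^T\right)$, and Lemma~\ref{Farkas} applies verbatim (you also correctly read the condition $\tr[Y\gamma]=0$ as $\tr[Y\gamma']=0$, fixing an evident notational slip in the statement). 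You have moreover put your finger on precisely where ``the exact same lines'' break: in Lemma~\ref{main} the reduction to $\tau>0$ works because the $N^{AB}$ term is annihilated in the marginal $L^B=\tr_A\left[\left(\gamma'^{1/2}\otimes I_d\right)M^{AB}\left(\gamma'^{1/2}\otimes I_d\right)\right]$, since $\tr[H_\ell\gamma']=0$ for $H_\ell\in\mV^{\perp}_{\rm out}$; now the summand $N_2=\sum_k G_k\otimes R_k^T\in\mV_{\rm out}\otimes(\mV^{\perp}_{\rm in})^T$ contributes a stray term $R_*^T\eqdef\sum_k\tr[G_k\gamma']R_k^T$, so positive definiteness of $M^{AB}$ only forces $\tilde\tau+R_*^T>0$ rather than $\tilde\tau>0$.

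The gap is that your proposed repair does not close this hole. Conjugating the $B$ factor by $\gamma^{1/2}$ does not ``kill'' $R_*$: what vanishes is only the scalar pairing $\tr[R\gamma]=0$ for $R\in\mV^{\perp}_{\rm in}$, whereas the operator $\gamma^{T/2}R_*^T\gamma^{T/2}$ is nonzero in general and need not lie in $(\mV^{\perp}_{\rm in})^T$ (e.g.\ in athermality, $\tr\left[\gamma^{1/2}R\gamma^{1/2}\,\gamma\right]=\tr[R\gamma^2]\neq0$), so it cannot be ``re-absorbed'' into the $N^{AB}$ component --- the subspaces $\mV_{\rm in},\mV^{\perp}_{\rm in}$ are not invariant under this conjugation. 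Worse, conjugating by $I_{d'}\otimes\gamma^{T/2}$ also deforms $Y\otimes\rho^T$ into $Y\otimes\gamma^{T/2}\rho^T\gamma^{T/2}$, so the resulting matrix is no longer of the form demanded by the proposition at all. What must actually be shown is: whenever some matrix in the Farkas span is positive definite, some matrix of the proposition's form (with $\tau>0$ \emph{and} $\tr[\tau]=1$) is positive definite. Your argument yields this only under the extra hypotheses $I_{d'}\in\mV_{\rm out}$ and $I_d\in\mV_{\rm in}$: then $I_{d'}\otimes R_*^T\in\mV_{\rm out}\otimes(\mV^{\perp}_{\rm in})^T$ really can be absorbed into $N^{AB}$, and $\tr[R_*]=0$ preserves the required trace identity $\tr[\tilde\tau+R_*^T]=-\tr[Y\rho']$; without them, both the leftover piece $Q\otimes R_*^T$ (where $Q$ is the component of $I_{d'}$ orthogonal to $\mV_{\rm out}$) and the trace mismatch $\tr[R_*]\neq0$ obstruct the reduction. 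So the proof is incomplete at its final step --- though your diagnosis is valuable, since it shows the paper's claim that the proposition follows ``the exact same lines'' as Lemma~\ref{main} is itself not literally true without further argument.
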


Note that this proposition is almost identical to Lemma~\ref{main} except that in this case, the space $\mV^{\perp}_{\rm out}\otimes\mV^{ T}_{\rm in}$ that $N^{AB}$ belongs to is replaced with the larger space $\left(\mV^{\perp}_{\rm out}\otimes\mV^{ T}_{\rm in}\right)\oplus\left(\mV_{\rm out}\otimes(\mV^{\perp}_{\rm in})^T\right)$. Note that $\mV^{\perp}_{\rm out}\otimes\mV^{ T}_{\rm in}$ is a subspace of this larger space, which is consistent with the fact that the self-dual set $\mO_{\rm sd}$ is a subset of $\mO_{\max}$.
We skip the poof of this proposition as it follows the exact same lines as the proof of Lemma~\ref{main}.

\section{Resource theories with a resource destroying map}

\begin{lemma}\label{simp}
Consider a QRT with the sets of free states $\mF_{\rm in}=\mF_{\rm out}=\mF$, and let $\mV\eqdef{\rm span}_{\mbb{R}}\{\mF\}$. If there exist a RDM, $\Delta:\mH_d\to\mH_d$, associated with the free set $\mF$, then $\mF=\mV\cap \mH_{d,+,1}$; i.e. $\mF$ is affine.
\end{lemma}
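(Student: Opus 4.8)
The plan is to exploit the two defining properties of the resource destroying map $\Delta$ together with its linearity as a CPTP map. First I would upgrade property 2 of the RDM, which only asserts $\Delta(\rho)=\rho$ for free states $\rho\in\mF$, to the stronger statement that $\Delta$ acts as the identity on the \emph{entire} subspace $\mV={\rm span}_{\mbb{R}}\{\mF\}$. Indeed, any $X\in\mV$ can be written as $X=\sum_i c_i\sigma_i$ with $c_i\in\mbR$ and $\sigma_i\in\mF$; since $\Delta$ is linear and fixes each $\sigma_i$, we get $\Delta(X)=\sum_i c_i\Delta(\sigma_i)=\sum_i c_i\sigma_i=X$, so that $\Delta|_{\mV}=\id$.

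The heart of the argument is then a one-line combination of the two properties. I would take any density matrix $\rho\in\mV\cap\mH_{d,+,1}$. On the one hand $\rho\in\mV$, so by the previous step $\Delta(\rho)=\rho$. On the other hand $\rho$ is a bona fide density matrix, so property 1 of the RDM gives $\Delta(\rho)\in\mF$. Combining the two, $\rho=\Delta(\rho)\in\mF$, which establishes the inclusion $\mV\cap\mH_{d,+,1}\subseteq\mF$.

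Finally I would observe that the reverse inclusion $\mF\subseteq\mV\cap\mH_{d,+,1}$ is immediate, since elements of $\mF$ are density matrices by definition and $\mF\subseteq{\rm span}_{\mbb{R}}\{\mF\}=\mV$. Therefore $\mF=\mV\cap\mH_{d,+,1}$, which is precisely the characterization~\eqref{affine2} of an affine set, completing the proof. There is no serious obstacle here: the only point requiring a moment's care is the passage in the first step from $\Delta=\id$ on the generating set $\mF$ to $\Delta=\id$ on \emph{all} real linear combinations, including those with negative coefficients. This is legitimate precisely because $\Delta$ is defined and linear on the full Hermitian space $\mH_d$, and not merely on the convex set $\mF$ — indeed it is exactly the linearity of $\Delta$ that forces affineness, which is the conceptual content of the lemma.
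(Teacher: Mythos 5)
Your proof is correct and follows essentially the same route as the paper's: extend $\Delta=\id$ from $\mF$ to all of $\mV$ by linearity, then combine this with property 1 of the RDM to get $\mV\cap\mH_{d,+,1}\subseteq\mF$, the reverse inclusion being trivial. Your write-up is in fact slightly more explicit than the paper's, which leaves the invocation of property 1 implicit in the phrase ``consists of only free states.''
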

\begin{proof}
From the linearity of $\Delta$ we get
that $\Delta(A)=A$ for all $A\in \mV$. Moreover, since $\mV\cap \mH_{d,+,1}$ is a subset of $\mV$ we get $\Delta(\rho)=\rho$ for all $\rho\in \mV\cap \mH_{d,+,1}$. Hence, $\mV\cap \mH_{d,+,1}$ consists of only free states and therefore is a subset of $\mF$. On the other hand, $\mF$ is a subset of $\mV$ and therefore also a subset of $\mV\cap \mH_{d,+,1}$. We therefore conclude $\mF=\mV\cap \mH_{d,+,1}$.
\end{proof}

Note that if there exists a RDM $\Delta$, then we must have $\Delta(X)=X$ for all $X\in\mV$ (where $\mV$ is defined in Lemma~\ref{simp}), and $\Delta(Y)\in\mV$ for all $Y\in\mV^{\perp}$, where $\mV^\perp$ is the orthogonal complement of $\mV$ in $\mH_{d}$, so that $\mH_d=\mV\oplus \mV^{\perp}$. If in addition, $u_{d}=\frac{1}{d}I_{d}\in\mF$ then $\Delta$ must be unital (i.e. $\Delta(I_d)=I_d$), and $\Delta(Y)=0$ for all $Y\in\mV^{\perp}$. To see it, note that for any $Z\in\mH_d$ and $Y\in\mV^{\perp}$, 
\be
\tr\left[Z\Delta(Y)\right]=\tr\left[\Delta(Z)Y\right]=0
\ee 
since $\Delta(Z)\in\mV$. Therefore, if the maximally mixed state, $u_d$, is free, then the problem simplifies dramatically:
\begin{theorem}
Using the same notations as above, let
$m\eqdef\dim\mV$, $n\eqdef\dim\mV^{\perp}=d^2-m$, $\{X_1,...,X_m\}$ an orthonormal basis of $\mV$, and $\{Y_1,...,Y_{n}\}$ an orthonormal basis of $\mV^{\perp}$. Suppose $u_d\equiv\frac{1}{d}I_d\in\mF$,
and define the linear map $\Delta:\mH_d\to\mH_d$ by the following action on the basis elements of $\mH_d=\mV\oplus\mV^{\perp}$:
\begin{align}
&\Delta(X_j)=X_j\quad\forall\;j\in\{1,...,m\}\label{b1}\\
& \Delta(Y_k)=0\quad\forall\;k\in\{1,...,n\}\;.\label{b2}
\end{align}
Then, there exists a RDM associated with the set $\mF$ if and only if the following two conditions hold
\begin{align}
& 1.\quad\mF=\mV\cap \mH_{d,+,1}\label{con1}\\
& 2.\quad\sum_{j,k=1}^{d}\Delta(|j\lr k|)\otimes |j\lr k|\geq 0\;.\label{con2}
\end{align}
Moreover, in the case that these two conditions hold, the RDM is unique and is given by $\Delta$.
\end{theorem}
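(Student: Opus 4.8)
The plan is to recognize that the map $\Delta$ specified by~\eqref{b1} and~\eqref{b2} is precisely the orthogonal projection of $\mH_d$ onto $\mV$ with respect to the Hilbert--Schmidt inner product, and that condition~\eqref{con2} is exactly Choi's criterion for the complete positivity of this $\Delta$ (its left-hand side being the Choi matrix of $\Delta$). With this identification in hand, both directions of the equivalence reduce to matching the two defining axioms of a RDM against the structural facts about $\mV$ that are established just before the theorem.

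For the forward direction I would assume a RDM $\Delta_0$ exists. Lemma~\ref{simp} immediately yields $\mF=\mV\cap\mH_{d,+,1}$, which is condition~\eqref{con1}. The discussion preceding the theorem shows that, because $u_d\in\mF$, any such $\Delta_0$ must fix $\mV$ pointwise, $\Delta_0(X)=X$ for all $X\in\mV$, and annihilate the orthogonal complement, $\Delta_0(Y)=0$ for all $Y\in\mV^\perp$. Evaluating on the bases $\{X_j\}$ and $\{Y_k\}$ shows that $\Delta_0$ agrees with $\Delta$ on a basis of $\mH_d$, hence $\Delta_0=\Delta$; this simultaneously proves the uniqueness claim. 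Since $\Delta_0=\Delta$ is a completely positive RDM, its Choi matrix, displayed in~\eqref{con2}, is positive semidefinite, giving condition~\eqref{con2}.

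For the converse I would assume~\eqref{con1} and~\eqref{con2} and verify that the projection $\Delta$ is a RDM. Condition~\eqref{con2} together with Choi's theorem gives complete positivity. Being an orthogonal projection, $\Delta$ is self-adjoint, $\Delta^\dagger=\Delta$; and since $u_d\in\mF$ forces $I_d=d\,u_d\in\mV$ we have $\Delta(I_d)=I_d$, so $\Delta^\dagger$ is unital, which is equivalent to $\Delta$ being trace preserving. Thus $\Delta$ is CPTP. The two RDM axioms then follow: for any density matrix $\rho$, $\Delta(\rho)$ lies in the range $\mV$ of $\Delta$ and is itself a density matrix, so $\Delta(\rho)\in\mV\cap\mH_{d,+,1}=\mF$ by~\eqref{con1}; and for $\rho\in\mF\subset\mV$ we have $\Delta(\rho)=\rho$ because $\Delta$ restricts to the identity on $\mV$.

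The main obstacle is the rigidity underpinning both the forward direction and the uniqueness claim: that every RDM must coincide with the \emph{orthogonal} projection, i.e.\ that $\ker\Delta_0=\mV^\perp$ rather than some oblique complement of $\mV$. This rests essentially on the self-adjointness and unitality consequences of $u_d\in\mF$ marshaled before the theorem; without a free maximally mixed state a RDM could in principle be a non-orthogonal idempotent onto $\mV$, and the clean Choi-matrix criterion would no longer pin it down. The remaining work is bookkeeping: checking that self-adjointness together with $\Delta(I_d)=I_d$ yields trace preservation, and that the complex-linear extension used to form the Choi matrix in~\eqref{con2} is consistent with the Hermiticity-preserving projection defined on $\mH_d$.
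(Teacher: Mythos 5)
Your proposal is correct and takes essentially the same route as the paper: identify $\Delta$ as the Hilbert--Schmidt orthogonal projection onto $\mV$, obtain condition~\eqref{con1} from Lemma~\ref{simp}, obtain the rigidity $\Delta_0=\Delta$ (hence uniqueness and condition~\eqref{con2}, via Choi) from the pre-theorem discussion that $u_d\in\mF$ forces any RDM to be unital and to annihilate $\mV^{\perp}$. If anything, your write-up is more complete than the paper's, whose proof records only the forward/uniqueness direction and leaves implicit the converse verification that you spell out, namely that \eqref{con1} and \eqref{con2} make the projection a CPTP map (complete positivity from Choi, trace preservation from self-adjointness plus unitality) satisfying both RDM axioms.
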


\begin{proof}
From the arguments above if $\tilde{\Delta}$ is a RDM, then
since $u_d\in\mF$ we have that $\tilde{\Delta}$ is a unital CPTP map satisfying~Eqs.(\ref{b1},\ref{b2}) (with $\tilde{\Delta}$ replacing $\Delta$). We therefore must have $\tilde{\Delta}=\Delta$.
This completes the proof.
\end{proof}

{\it Remark.} If the maximally mixed state $\frac{1}{d}I_d\notin\mF$, then $\Delta$ is not unital. In this case, $\Delta$ is not necessarily unique, and the problem of finding the NSC that determines the existence of a RDM can be formulated as a feasibility problem in SDP. Below we use Lemma~\ref{Farkas} to find these NSC for the more general case of non-unital RDM.

The simplest example of a unital RDM can be found in the ART of coherence. There, $\mF$ is the set of all diagonal density matrices with respect to some fixed basis $\{|j\ra\}_{j=1}^{d}$. Thus, $u_d\in\mF$ and $\Delta$ is the unique completely decohering map $\Delta(\cdot)=\sum_j|j\lr j|(\cdot)|j\lr j|$. Note that in this case both~\eqref{con1} and~\eqref{con2} are satisfied, and the completely dephasing map $\Delta$, is the unique CPTP map that satisfys Eqs.(\ref{b1},\ref{b2}).

\begin{example} Real vs complex quantum mechanics.
To see why the affine condition in~\eqref{con1} is not sufficient, consider the following mathematical model of
real vs complex quantum mechanics. In this model, $\mF$ is the set of all real density matrices with respect to some fixed basis $\{|j\ra\}_{j=1}^{d}$. That is, $\rho\in\mF$ if and only if $\la j|\rho|k\ra\in\mathbb{R}$ for all $j,k\in\{1,...,d\}$. Thus, $\frac{1}{d}I_d\in\mF$, and the affine condition of the theorem holds, namely $\mF=\mV\cap \mH_{d,+,1}$. Note that 
\be
\mV=\spa\{|j\lr k|+|k\lr j|\}_{j,k}\quad j\leq k\in\{1,...,d\}
\ee
and 
\be
\mV^{\perp}=\spa\{i\left(|j\lr k|-|k\lr j|\right)\}\quad j< k\in\{1,...,d\}\;.
\ee 
According to the theorem above, if there exists a RDM $\Delta$ associated with $\mF$, then it must satisfies $\Delta(|j\lr j|)=|j\lr j|$ for all $j=1,...,d$, and for all $j<k\in\{1,...,d\}$, 
\be
\Delta\left(|j\lr k|+|k\lr j|\right)=|j\lr k|+|k\lr j| 
\ee
and 
\be
\Delta\left(i(|j\lr k|-|k\lr j|)\right)=0\;. 
\ee
Hence, 
\be
\Delta(|j\lr k|)=\frac{1}{2}\left(|j\lr k|+|k\lr j|\right)\quad\forall j, k\in\{1,...,d\}.
\ee
The Choi matrix of $\Delta$ is therefore given by
$$
\sum_{j,k=1}^{d}\Delta(|j\lr k|)\otimes |j\lr k|=\frac{1}{2}\sum_{j, k=1}^{d}(|j\lr k|+|k\lr j|)\otimes |j\lr k|
$$
which is not positive semi-definite. Hence, there is no RDM associate with the set of real density matrices.
\end{example}

\subsection{Existence of Non-unital resource destroying map}

In this subsection we discuss the more general case of non-unital RDM; that is, we don't assume here that the maximally mixed state $\frac{1}{d}I_d$ is in $\mF$.
The identity element $I_d\in\mH_d$ can therefore be written as
$I_d=P+Q$, where $P\in\mV$ and $Q\in\mV^{\perp}$. Since $\tr(PQ)=0$ we must have 
\begin{align}
& p\eqdef\tr(P)=\tr(P^2)\geq 0\nonumber\\
& q\eqdef\tr(Q)=\tr(Q^2)=d-p\geq 0
\end{align}
Moreover, note that if an RDM exists, $\mF_{d}$ is non-empty, i.e. $\mV$ contains at least one positive-semidefinite matrix with trace 1. In this case, $I\notin\mV^\perp$ so that $Q\neq I$ and $P\neq 0$; hence, $p>0$. 

Set $m\eqdef\dim\mV$ (hence $\dim\mV^\perp=d^2-m$) and let $\{X_1,...,X_m\}$ be an orthonormal basis of $\mV$ with $X_1=\frac{1}{\sqrt{p}}P$ . Similarly, let $\{Y_1,...,Y_{d^2-m}\}$ be an orthonormal basis of $\mV^\perp$ with $Y_1=\frac{1}{\sqrt{q}}Q$ if $q>0$. Note that $X_2,...,X_m$ and $Y_2,...,Y_{d^2-m}$ all have zero trace (and if $q=0$ then $Y_1$ is also traceless). 
\begin{theorem}
Using the same notations as above, let $\mW$ be a subspace of $\mH_d\otimes\mH_d$ given by
\be
\mW\eqdef{\rm span}_{\mathbb{R}}\left\{X_j\otimes Y_{k}^T\right\}_{j\in\{2,...,m\}\;;\;k\in\{1,...,d^2-m\}}
\ee
and let $\mW^\perp$ be the orthogonal complement of $\mW$ in $\mH_d\otimes\mH_d$.
Set the matrix $G\in\mW^\perp$ to be
\be
G\equiv \frac{1}{d}\left(\sqrt{\frac{q}{p}}X_1\otimes Y_{1}^T+\sum_{j=1}^{m}X_j\otimes X_{j}^{T}\right)
\ee
Finally, let $\mK\subset \mH_d\otimes\mH_d$ to be the subspace
\be
\mK\eqdef\left\{A\in\mW^\perp\;\Big|\;\tr\left[AG\right]=0\right\}\;,
\ee
 Then, there exists a RDM 
corresponding to set of free states $\mF$ if and only if the subspace $\mK$ does not contain a positive definite matrix.
\end{theorem}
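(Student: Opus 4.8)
The plan is to turn the existence of an RDM into an SDP feasibility problem for its Choi matrix and then invoke the SDP Farkas lemma (Lemma~\ref{Farkas}). As observed in the paragraph preceding the theorem, any RDM $\Delta$ must act as the identity on $\mV$ and satisfy $\Delta(Y)\in\mV$ for all $Y\in\mV^{\perp}$; conversely, a CPTP map with these two properties is automatically an RDM, since its range then lies in $\mV\cap\mH_{d,+,1}=\mF$ (the equality using that $\mF$ is affine, which by Lemma~\ref{simp} is forced whenever an RDM exists, so this is no genuine restriction). Thus the task reduces to characterizing, through the Choi isomorphism, the CPTP maps $\Delta:\mH_d\to\mH_d$ that fix $\mV$ pointwise and send $\mV^{\perp}$ into $\mV$.

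Next I would compute the Choi matrix $J\eqdef(\Delta\otimes\id)(|\phi^+\lr\phi^+|)$ in the given orthonormal basis. The key tool is the identity $|\phi^+\lr\phi^+|=\sum_a E_a\otimes E_a^T$, valid for any real-orthonormal basis $\{E_a\}$ of $\mH_d$; applying it to $\{X_1,\dots,X_m,Y_1,\dots,Y_{d^2-m}\}$ and substituting $\Delta(X_j)=X_j$ and $\Delta(Y_k)=V_k\in\mV$ gives
\be
J=\sum_{j=1}^m X_j\otimes X_j^T+\sum_{k}V_k\otimes Y_k^T,\qquad V_k\in\mV.
\ee
Expanding $V_k=\sum_i c_{ki}X_i$, the terms with $i\ge 2$ lie in $\mW$, while the $i=1$ terms contribute $\sum_k c_{k1}X_1\otimes Y_k^T$. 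Here I would impose trace preservation $\tr[\Delta(Y_k)]=\tr[Y_k]$: since $X_2,\dots,X_m$ are traceless and $\tr[X_1]=\sqrt{p}$, this pins down $c_{k1}=\tr[Y_k]/\sqrt{p}$, i.e.\ $c_{k1}=0$ for $k\ge 2$ and $c_{11}=\sqrt{q/p}$ (consistently with the case $q=0$, where $Y_1$ is traceless). Hence $J=dG+W_0$ with $W_0\in\mW$, so that $J\in\mW\oplus\mbb{R}G=\mK^{\perp}$, with $J\ge 0$ and $\tr[J]=d>0$. This proves the forward inclusion.

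For the converse I would start from an arbitrary nonzero positive semidefinite $J\in\mK^{\perp}=\mW\oplus\mbb{R}G$, write $J=sG+W_0$, and note that every element of $\mW$ is traceless (because $\tr[X_j\otimes Y_k^T]=\tr[X_j]\tr[Y_k]=0$ for $j\ge 2$), whereas $\tr[G]=\tfrac1d(p+q)=1$; therefore $\tr[J]=s$, and positivity forces $s>0$. Rescaling $J\mapsto (d/s)J$ I may assume $s=d$, and then a direct evaluation of $\Delta(N)=\tr_B[J(I_d\otimes N^T)]$ on the basis yields $\Delta(X_i)=X_i$ and $\Delta(Y_l)=\sqrt{q/p}\,\delta_{l1}X_1+\sum_{j\ge 2}w_{jl}X_j\in\mV$. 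Thus the induced $\Delta$ is completely positive (as $J\ge 0$), trace preserving, fixes $\mV$ pointwise, and has range in $\mV$ — hence an RDM.

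Finally I would close the loop with Farkas' lemma: taking $\{H_j\}$ a basis of $\mK$, the statement ``$\mK$ contains no positive definite matrix'' is exactly the failure of $\sum_j r_j H_j>0$, which by Lemma~\ref{Farkas} is equivalent to the existence of a nonzero $\sigma\ge 0$ with $\tr[H_j\sigma]=0$ for all $j$, i.e.\ a nonzero positive semidefinite $\sigma\in\mK^{\perp}$. Combined with the two preceding steps, this gives the claimed equivalence. The main obstacle I anticipate is the bookkeeping in the second and third steps — correctly translating the four defining properties (complete positivity, trace preservation, fixing $\mV$, range in $\mV$) through the Choi isomorphism, and in particular seeing that trace preservation is precisely what fixes the $X_1\otimes Y_k^T$ coefficients and produces the weight $\sqrt{q/p}$ in $G$. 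The rescaling argument, which hinges on $\mW$ consisting only of traceless matrices (so that the $G$-coefficient equals $\tr[J]>0$), is the other point requiring care.
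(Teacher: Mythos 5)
Your proof is correct, and it reaches the theorem by a route that is dual to the paper's, though both rest on the same two pillars (the Choi isomorphism and the SDP Farkas lemma). The paper never parameterizes the admissible Choi matrices explicitly: it writes the RDM conditions as a list of linear constraints $\tr[\sigma^{AB}H_j]=0$ (its conditions (a)--(c) plus a trace-preservation condition carefully pruned of redundancies, the step around Eq.~\eqref{zw}), applies Farkas to that list, and then does the linear algebra on the \emph{witness} side, showing that the linear combinations $M$ of the constraint matrices sweep out exactly $\mK$ via the identity $M=M'-\tr[M'G]\,I\otimes I$ with $M'\in\mW^{\perp}$ arbitrary. You instead do the complementary computation on the \emph{primal} side: using $\sum_a E_a\otimes E_a^{T}=|\phi^+\rl\phi^+|$ for the adapted Hermitian orthonormal basis, you show that the Choi matrices of trace-preserving maps fixing $\mV$ pointwise and sending $\mV^{\perp}$ into $\mV$ are exactly $dG+\mW$, hence (after your rescaling argument, which correctly exploits that $\mW$ consists of traceless matrices while $\tr[G]=1$) the nonzero PSD elements of $\mK^{\perp}$; one invocation of Farkas on a basis of $\mK$ then converts ``nonzero PSD in $\mK^{\perp}$'' into ``no positive definite matrix in $\mK$''. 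The two computations encode the same fact (the span of the constraint matrices equals $\mK$ if and only if the feasible set equals $\mK^{\perp}$), but yours makes the weight $\sqrt{q/p}$ in $G$ transparent as the imprint of trace preservation, avoids the paper's bookkeeping of dependent constraints, and as a bonus exhibits an explicit correspondence (up to scaling) between RDMs and PSD matrices in $\mK^{\perp}$. One shared caveat: the converse direction needs $\mF=\mV\cap\mH_{d,+,1}$ (affineness) to conclude that the constructed map outputs \emph{free} states; this cannot be derived from the existence of a PSD matrix in $\mK^{\perp}$ and must be taken as a standing assumption. You flag this explicitly, whereas the paper's ``if and only if'' in its conditions 1--3 relies on the same implicit assumption, so this is not a defect of your argument relative to the paper's.
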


\begin{proof}
With these notations, a CPTP map, $\Delta:\;\mH_d\to\mH_d$, is an RDM if and only if for all $i,j\in\{1,...,m\}$
and all $k,\ell\in\{1,...,d^2-m\}$
\begin{align}
& 1.\quad\tr\left[X_i\Delta(X_j)\right]=\delta_{ij}\nonumber\\
& 2.\quad\tr\left[Y_k\Delta(X_i)\right]=0\nonumber\\
& 3.\quad\tr\left[Y_k\Delta(Y_\ell)\right]=0\;.
\end{align}
We denote the Choi matrix of $\Delta$ by
	\be
	\sigma^{AB}\equiv\sum_{j,k=1}^{d}\Delta(E_{jk})\otimes E_{jk}\;\in \mH_{d'}\otimes \mH_{d}\;,
	\ee
	where the $E_{jk}=|j\rl k|$. 
In the Choi representation, the 3 conditions above take the form: 
\begin{align}\label{3con}
& (a)\quad\tr\left[\sigma^{AB}\left(X_i\otimes X_{j}^{T}-\frac{1}{d}\delta_{ij}I\otimes I\right)\right]=0\nonumber\\
& (b)\quad\tr\left[\sigma^{AB}(Y_k\otimes X_{i}^{T})\right]=0\nonumber\\
& (c)\quad\tr\left[\sigma^{AB}(Y_k\otimes Y_{\ell}^{T})\right]=0\;.
\end{align}
The condition $\tr_A\left[\sigma^{AB}\right]=I$ is equivalent to
\be\label{tp}
\tr\left[\sigma^{AB}(I \otimes Z)\right]=0\;\;\forall\;Z\in\mH_{d,0}\;.
\ee
Note that the condition $\tr\left[\sigma^{AB}\right]=d$ was removed since if there exists positive semi-definite matrix $\sigma^{AB}\neq 0$ that satisfies all the above conditions, then the matrix $\frac{d}{\tr\left[\sigma^{AB}\right]}\sigma^{AB}$ will satisfy also this last condition.
Moreover, note that the condition in~\eqref{tp} is not independent of the three conditions in~\eqref{3con}. Particularly, conditions (b) and (c) implies that $\tr\left[\sigma^{AB}(Q\otimes Z)\right]=0$ for all $Z$ in $\mH_d$ (and therefore for all $Z\in\mH_{d,0}$) so that condition~\eqref{tp} can be replaced with
\be
\tr\left[\sigma^{AB}(X_1 \otimes Z)\right]=0\;\;\forall\;Z\in\mH_{d,0}\;.
\ee
Next, note that if $Z\in \mH_{d,0}\cap\mV$ then the above condition follows from $(a)$ in~\eqref{3con}. 
We can therefore assume that $Z$ has the form:
\be
Z=\alpha \left(\frac{1}{\sqrt{p}}X_1-\frac{1}{\sqrt{q}}Y_1\right)
+Y\quad\alpha\in\mathbb{R}\;\;,\;\;Y\in \mH_{d,0}\cap\mV^\perp
\ee
assuming $q>0$. If $q=0$ we just take $Z\in\mH_{d,0}\cap\mV^\perp$. The above condition is still not completely independent of condition $(a)$ in~\eqref{3con}. To make it independent we repalce~\eqref{tp} with the following condition: for all $k\in\{1,...,d^2-m\}$
\be\label{zw}
\tr\left[\sigma^{AB}\left(X_1\otimes Y_{k}^{T}-\frac{r}{d}\delta_{1k}I\otimes I\right)\right]=0\;,
\ee
where $r\eqdef\sqrt{q/p}$. Note that the equality $\tr\left[\sigma^{AB}\left(X_1\otimes W\right)\right]=0$ 
with $W\eqdef\frac{1}{\sqrt{p}}X_1-\frac{1}{\sqrt{q}}Y_1\in\mH_{d,0}$ follows form~\eqref{zw} for $k=1$, together with $(a)$ in~\eqref{3con} with $i=j=1$. 

To see when there exists such a semi-definite positive matrix $\sigma^{AB}$ that satisfies~\eqref{3con} we apply the following generalization of the Farkas lemma from LP to SDP:

	Hence, from Farkas Lemma we get that such a $\sigma^{AB}$ exists if and only if the matrix
	\begin{align}\label{G}
	M & \eqdef\sum_{i,j=1}^{m}a_{ji}\left(X_i\otimes X_{j}^{T}-\frac{1}{d}\delta_{ij}I\otimes I\right)+\sum_{k=1}^{d^2-m}\sum_{i=1}^{m}b_{ki}Y_k\otimes X_{i}^{T}\nonumber\\
	&+\sum_{k,\ell=1}^{d^2-m}c_{k\ell}Y_k\otimes Y_{\ell}^{T}
	+ \sum_{k=1}^{d^2-m}d_k\left(X_1\otimes Y_{k}^{T}-\frac{r}{d}\delta_{1k}I\otimes I\right)
	\end{align}
	is not positive definite for all $A\in\mathbb{R}^{m\times m}$, $B\in\mathbb{R}^{(d^2-m)\times m}$, $C\in\mathbb{R}^{(d^2-m)\times(d^2-m)}$, and $d_k\in\mathbb{R}$. Note that $M$ can be written as $M=M'-\tr\left[M'G\right]I$, where $M'$ is an arbitrary matrix in $\mW^\perp$. Moreover, since $\tr[G]=1$ we have $\tr[MG]=0$; that is, $M$ is any matrix in $\mW^\perp$ satisfying $\tr[MG]=0$. This completes the proof.
	\end{proof}



\begin{references}

\bibitem{DW04}
I. Devetak and A. Winter, IEEE Transactions on Information Theory \textbf{50} (12): 3183-3196 (2004); [eprint:
qaunt-ph/0304196]

\bibitem{NC11}
M. A. Nielsen and I. L. Chuang, "Quantum Computation and Quantum Information" (10th Edition, Cambridge 2011).

\bibitem{W13}
M. M. Wilde, "Quantum Information Theory" (Cambridge 2013).

\bibitem{PV07}
M. B. Plenio and S. Virmani, Quant. Inf. Comp. \textbf{7}, 1 (2007).

\bibitem{HHH09}
 R. Horodecki, P. Horodecki, M. Horodecki and K. Horodecki, Rev. Mod. Phys. \textbf{81}, 865 (2009).

\bibitem{Gou08}
G. Gour and R. W. Spekkens, New Journal of Physics \textbf{10}, 033023 (2008).

\bibitem{HOO2013}
M. Horodecki and J. Oppenheim, Int. J. Mod. Phys. B\textbf{27}, 1345019 (2013).

\bibitem{CFS14}
B. Coecke, T. Fritz, R. W. Spekkens, Information and Computation, issn 0890-5401 (2016); arXiv:1409.5531 (2014).

\bibitem{BHO13}
F. G. S. L. Brand\~ao, M. Horodecki, J. Oppenheim, J. M. Renes, R. W. Spekkens., \prl \textbf{111}, 250404 (2013).

\bibitem{BHN13}
F.G.S.L. Brand\~ao, M. Horodecki, N.H.Y. Ng, J. Oppenheim, S. Wehner, arXiv:1305.5278.

\bibitem{HO13}
M. Horodecki, J. Oppenheim, Nature Communications 4, 2059 (2013).

\bibitem{FDO12}
P. Faist, F. Dupuis, J. Oppenheim, R. Renner,  Nature Communication \textbf{6}, 7669 (2015).

\bibitem{Los15}
M. Lostaglio, K. Korzekwa, D. Jennings, T. Rudolph, Phys. Rev. X \textbf{5}, 021001 (2015). 

\bibitem{Lost15}
M. Lostaglio, D. Jennings, T. Rudolph, Nature Communications 6, 6383 (2015).

\bibitem{GMN14}
G. Gour, M. P. Muller, V. Narasimhachar, R. W. Spekkens, N. Y. Halpern,  arXiv:1309.6586 (2014).

\bibitem{NG14}
V. Narasimhachar and G. Gour,  Nature Communications \textbf{6}, 7689 (2015). 

\bibitem{Gou09}
G. Gour, I. Marvian, R. W. Spekkens, Physical Review A \textbf{80}, 012307 (2009).

\bibitem{Mar14}
I. Marvian and R. W. Spekkens, New J. Phys. \textbf{15}, 033001 (2013); Phys. Rev. A \textbf{90}, 014102 (2014);
Nature Communication \textbf{5}, 3821 (2014).

\bibitem{Sko12}
M. Skotiniotis, G. Gour, New Journal of Physics \textbf{14}, 073022 (2012).

\bibitem{Tol12}
B. Toloui, G. Gour, New Journal of Physics \textbf{14}, 123026 (2012).

\bibitem{BCP14}
T. Baumgratz, M. Cramer, and M.B. Plenio, Phys. Rev. Lett. \textbf{113}, 140401 (2014).

\bibitem{Chi16}
E. Chitambar and M.-H. Hsieh, Phys. Rev. Lett. \textbf{117}, 020402 (2016).

\bibitem{CG16}
E. Chitambar and G. Gour, Phys. Rev. Lett. \textbf{117} 030401 (2016).

\bibitem{Marv16}
I. Marvian and R. W. Spekkens,  quant-ph/1602.08049.

\bibitem{Win16}
A. Winter and D. Yang, Phys. Rev. Lett. \textbf{116}, 120404 (2016).

\bibitem{Nap16}
C. Napoli, T. R. Bromley, M. Cianciaruso, M. Piani, N. Johnston, and Gerardo Adesso,
Phys. Rev. Lett. \textbf{116} 150502 (2016).

\bibitem{Str16}
A. Streltsov, G. Adesso, M. B. Plenio,  arXiv:1609.02439.

\bibitem{BL05}
S. L. Braunstein and P. van Loock, Rev. Mod. Phys. \textbf{77} 513, (2005).

\bibitem{BJS03}
D.E. Browne, J. Eisert, S. Scheel, M.B. Plenio, Phys. Rev. A 67, 062320 (2003).


\bibitem{VMG14}
V. Veitch, S. A. H. Mousavian, D. Gottesman, J. Emerson, New J. Phys. 16, 013009 (2014)


\bibitem{GHH14}
A. Grudka, K. Horodecki, M. Horodecki, P. Horodecki, R. Horodecki, P. Joshi, W. K?obus, 
A. Wójcik, Phys. Rev. Lett. \textbf{112}, 120401 (2014).

\bibitem{RHP14}
A. Rivas, S. F. Huelga, M. B. Plenio, Rep. Prog. Phys. \textbf{77}, 094001 (2014). 

\bibitem{Rio15}
Lidia del Rio, Lea Kraemer, Renato Renner, quant-ph/1511.08818.

\bibitem{GHS}
G. Gour, T. Heinosaari, and R. W. Spekkens, in preparation.

\bibitem{Bra15}
F. G.S.L. Brandao and G. Gour, Phys. Rev. Lett. \textbf{115}, 070503 (2015).

\bibitem{Ren05}
Renato Renner, "Security of Quantum Key Distribution", Ph.D. Thesis, Diss. ETH No. 16242; arXiv:quant-ph/0512258.

\bibitem{Kon09}
Robert K\"onig, Renato Renner, and Christian Schaffner, IEEE Transactions on Information Theory, 
55(9), 4337 (2009).

\bibitem{Vit13}
Alexander Vitanov, Frederic Dupuis, Marco Tomamichel, Renato Renner,
IEEE Transactions on Information Theory 59, p. 2603-2612 (2013).

\bibitem{Tom09}
Marco Tomamichel, Roger Colbeck, Renato Renner. "The Fully Quantum Asymptotic Equipartition Property." IEEE Transactions on Information Theory \textbf{55}, 5840 (2009).

\bibitem{Buscemi}
F. Buscemi, Problems of Information Transmission, \textbf{53} (3), 201 (2016).

\bibitem{BG16}
F. Buscemi and G. Gour, \pra (to appear);  arXiv:1607.05735.

\bibitem{Bra}
Fernando G.S.L. Brandao and Martin B. Plenio, Nature Physics \textbf{4}, 873 (2008).

\bibitem{Liu}
Z.-W. Liu, X. Hu, and S. Lloyd,  quant-ph/1606.03723.

\bibitem{Ish05}
S. Ishizaka, M. B. Plenio, Phys. Rev. A 71, 052303 (2005); Phys. Rev. A 72, 042325 (2005).

\bibitem{Gir15}
M. Girard and G. Gour, New J. Phys. \textbf{17}, 093013 (2015).

\bibitem{Gir16}
M. Girard and G. Gour, quant-ph/1609.08016.

\bibitem{Kha}
L. Khachiyan and L. Porkolab, FOCS  162-171 (1997); L. Khachiyan and L. Porkolab,  Discrete \& Computational Geometry \textbf{23} (2): 207-224 (2000).

\bibitem{GG16}
M. Girard and G. Gour, in preparation.

\bibitem{GR16}
R. Duan, G. Gour, D. Jennings, I. Marvian, in preparation.

\end{references}
\end{document}